\documentclass[sigconf]{}
\renewcommand\footnotetextcopyrightpermission[1]{} 
\settopmatter{printacmref=false}
\usepackage[ruled,vlined, linesnumbered]{algorithm2e}
\newtheorem{theorem}{Theorem}[section]

\newtheorem{problem}{Problem}
\usepackage{soul}

\AtBeginDocument{%
  \providecommand\BibTeX{{%
    \normalfont B\kern-0.5em{\scshape i\kern-0.25em b}\kern-0.8em\TeX}}}

\setcopyright{acmcopyright}
\copyrightyear{2022}
\acmYear{2022}
\acmDOI{10.1145/1122445.1122456}

\acmConference[ACM SIGMOD/PODS '22]{ACM SIGMOD/PODS '22: ACM Conference on Management of Data}{June 12--17, 2022}{Philadelphia, PA}
\acmBooktitle{Woodstock '18: ACM Symposium on Neural Gaze Detection,
  June 03--05, 2018, Woodstock, NY}
\acmPrice{15.00}
\acmISBN{978-1-4503-XXXX-X/18/06}



\begin{document}

\title{An Efficient Pruning Process with Locality Aware Exploration and Dynamic Graph Editing for Subgraph Matching}

\author{Zite Jiang}
\email{jiangzite19s@ict.ac.cn}
\orcid{}
\affiliation{%
  \institution{University of Chinese Academy of Sciences}
  \state{Beijing}
  \country{China}
}

\author{Boxiao Liu}
\email{liuboxiao@ict.ac.cn}
\orcid{}
\affiliation{%
  \institution{University of Chinese Academy of Sciences}
  \state{Beijing}
  \country{China}
}

\author{Shuai Zhang}
\email{zhangshuai-ams@ict.ac.cn}
\orcid{}
\affiliation{%
  \institution{University of Chinese Academy of Sciences}
  \state{Beijing}
  \country{China}
}

\author{Xingzhong Hou}
\email{houxingzhong@ict.ac.cn}
\orcid{}
\affiliation{%
  \institution{University of Chinese Academy of Sciences}
  \state{Beijing}
  \country{China}
}

\author{Mengting Yuan}
\email{ymt@whu.edu.cn}
\orcid{}
\affiliation{%
  \institution{Wuhan University}
  \state{Wuhan}
  \country{China}
  \postcode{43017-6221}
}

\author{Haihang You}
\email{youhaihang@ict.ac.cn}
\orcid{}
\affiliation{%
  \institution{Institute of Computing Technology, Chinese Academy of Sciences}
  \state{Beijing}
  \country{China}
}

\renewcommand{\shortauthors}{Jiang, et al.}

\begin{abstract}
Subgraph matching is a NP-complete problem that extracts isomorphic embeddings of a query graph $q$ in a data graph $G$. In this paper, we present a framework with three components: $Preprocessing$, $Reordering$ and $Enumeration$. While pruning is the core technique for almost all existing subgraph matching solvers, it mainly eliminates unnecessary enumeration over data graph without alternation of query graph. By formulating a problem: Assignment under Conditional Candidate Set(ACCS), which is proven to be equivalent to Subgraph matching problem, we propose Dynamic Graph Editing(DGE) that is for the first time designed to tailor the query graph to achieve pruning effect and performance acceleration. As a result, we proposed DGEE(Dynamic Graph Editing Enumeration), a novel enumeration algorithm combines Dynamic Graph Editing and Failing Set optimization. Our second contribution is proposing fGQL , an optimized version of GQL algorithm, that is utilized during the $Preprocessing$ phase. Extensive experimental results show that the DGEE-based framework can outperform state-of-the-art subgraph matching algorithms.


\end{abstract}

\begin{CCSXML}
<ccs2012>
   <concept>
       <concept_id>10002951.10003317.10003325</concept_id>
       <concept_desc>Information systems~Information retrieval query processing</concept_desc>
       <concept_significance>500</concept_significance>
       </concept>
 </ccs2012>
\end{CCSXML}

\ccsdesc[500]{Information systems~Information retrieval query processing}

\keywords{subgraph query processing, subgraph matching, dynamic graph editing, fGQL}


\maketitle
\setcopyright{none}
\pagestyle{plain}

\section{Introduction}
Nowadays, massive multi-source heterogeneous data is generated and accumulating in contemporary society. Graph is the data structure suitable for depicting such data, and there is an arsenal of graph algorithms to retrieve valuable information. 

Subgraph matching is a core graph algorithm known as an NP-complete problem \cite{hartmanis1982computers}. It has applications in many fields. For chemoinformatics \cite{yan2004graph, yang2007path}, finding similarities in the structural formulas of compounds can be converted to a subgraph matching problem. And in the graph database \cite{hong2015subgraph, kim2018turboflux}, the rules are represented in the form of graphs. When querying for the data of a given rule, it could be transformed into a subgraph matching problem. And subgraph matching can also be applied in social networks \cite{fan2012graph}. Therefore, to improve the performance of subgraph matching and design an efficient algorithm has been a research topic for decades.  

In general, the subgraph matching problem can be described as that given a data graph $G$, to find a subgraph $G'$ that is isomorphic to the query graph $q$.
The research on subgraph matching can be further categorised as finding exact solutions and finding approximate solutions. 
The exact solution solvers can be further categorized into three typical methods: \textit{Enumeration-based}, \textit{joint-based} and \textit{constraint programming}. 
The \textit{Enumeration-based} algorithms usually follow the depth-first search strategy as summarized in \cite{sun2020memory}. And \textit{joint-based} algorithms are designed for small query graphs with a few vertices like RapidMatch \cite{sun2020rapidmatch}.  
As for the \textit{constraint programming}, Glasgow Subgraph Solver \cite{archibald2019sequential} uses constraint programming to tackle a wide range of graphs. For non-exact solution, there are approximation algorithms such as SUBDUE \cite{cook1993substructure} and LAW \cite{wolverton2003law}. And these approximation algorithms use certain features such as distance to measure similarity of graphs \cite{bunke1983inexact, sanfeliu1983distance}. Recent years, graph neural networkGNN) has been utilized to solve subgraph matching problem. For example, neural subgraph isomorphism counting \cite{liu2020neural} was proposed recently. In this paper, we focus on the exact subgraph matching problem, specifically \textit{Enumeration-based} subgraph matching method.

To design an efficient subgraph matching algorithm, it is inevitable to prune the infeasible vertices along the enumeration. There exists two strategies: static pruning and dynamic pruning. 




Based on substantial review of previous research, we identify two challenges that we will try to give solutions in this paper. 
It is common practice to utilize connectivity of query graph to optimize enumeration process and carry out pruning which has been proven to be an effective by previous work. For example, FS \cite{han2019efficient} uses the ancestors of vertices to optimize the backtracking process, and VE \cite{kim2021versatile} performs special processing on one-degree vertices in the search process. Both rely on the query graph. However, When edge number of query graph increases, the efficiency of these algorithms will drop drastically. Often times, there is no performance gain on a complete graph, and it could even introduce additional overhead. Therefore, it would be desirable if we could manipulate the query graph to ensure such algorithms to have stable and efficient pruning capabilities. 
The performance of pruning pretty much determine the overall performance of subgraph matching algorithms. 
Static pruning method will use check operation to remove infeasible candidate vertices. And in iterative method, previous work ignored the locality for check operation like GQL.

So it is important to reduce redundant $check$ or replace it by a fast pre-check method. 


\textit{Challenge 1: Can we transform the query graph without altering the original subgraph matching problem? } 

\textit{Our Solution: Dynamic Graph Editing. } With the given subgraph matching problem, it is difficult or at least not obvious that there is an answer to the question.
We formulate a problem: Assignment under Conditional Candidate Set(ACCS), which is proven to be equivalent to Subgraph matching problem, we propose Dynamic Graph Editing(DGE) that is for the first time designed to tailor the query graph. In general, DGE occurs in the case of partial matching results with several enumerated query vertices. And it will identify the removable edges in the query graph that follows the \textbf{Edge deleting Rule} explained in Section 6.3. 
The Rule is based on whether the Valid Candidate Set (VCS) will change after the intersection operation of the Conditional Candidate Set (CCS). The definition of VCS and CCS will be given in Section 6.

Finally, we propose the DGEE(DGE Enumeration) algorithm and carry it out at the $enumeration$ stage. Since the set-intersection based enumeration requires a step-by-step set intersection operation, it will not introduce the extra computation and maintain the  complexity of the original enumeration algorithm. Moreover, it exhibits excellent pruning capability. 

\textit{Challenge 2: Can we reduce or even eliminate redundant $check$  in iterative method? } 

\textit{Our Solution:  Locality Aware Exploration. } 
Principle of Locality is common foundational concepts used in computer system, like cache memory access \cite{li2015locality}.
It can be further characterized by two types of locality: temporal locality and spatial locality. 
And we take this concept into designing an efficient iterative method by Neighborhood Update and pre-check.
Algorithm like GQL\cite{he2008graphs} takes most time on $check$ operation: the $SemiMatching$. It runs in $O(|V(q)|^2*\max {d(v)})$ time. 
First, only the neighbors of updated vertex need to be considered in next iteration, which leads to less number of calls to \textit{SemiMatching}. Then, to preserve the last round matching result and do a pre-check. Since the time complexity of the pre-check is only $O(|V(q)|)$, it is much efficient than doing a full check.



Our contributions are the following:

\begin{enumerate}
    \item We reduce the subgraph matching problem to an ACCS (Assignment under Conditional Candidate Set) Problem. To our best knowledge, it is the first time the problem gets proposed. And based on ACCS, we give the Dynamic Graph Editing theorem and its proof. 
    \item We propose the Dynamic Graph Editing Enumeration algorithm in for subgraph matching. It exhibits stable performance under different static pruning methods, and outperforms the state-of-the-arts on real-world graphs. The baseline is the best performed out of eight (QuickSI, GraphQL, CFL, CECI, DAF, RI, VF2++, Glasgow) algorithms.
    \item We propose a fast GQL static pruning with pre-check enabled which is optimized by observing data locality effect. Our experiments show that it is the strongest static pruning strategy.
    \item We design an efficient subgraph matching procedure which demonstrates SOTA performance. 
\end{enumerate}

The rest of paper is organized as follows. Section 2 give notations and the problem statement. Section 3 is related works. Section 4 is the overview of our algorithm. Section 5 describes the static pruning and fGQL. Section 6 gives the definition of DGE and the enumeration algorithm DGEE. In Section 7, we compare the proposed algorithm with SOTA algorithms and present the performance analysis.  Section 8 concludes the paper.

\section{Preliminaries}
\subsection{Notations}
Table \ref{tab:notation} lists the notations used in this paper. 
In the article, unless otherwise specified, $u$ represents the vertex in the query graph, and $v$ represents the vertex in data graph. 

\begin{table}[h]
    \caption{Notations frequently used in this paper}
     \begin{tabular}{l |l}
     \hline \\
     Symbol & Definition \\ 
     \hline\hline
     $g, q, G$ & graph, query graph, data graph \\ 
     $u, v$ & $u, v$ is the vertex \\
     $V(g), E(g)$ & vertex set of graph g, edge set of graph g \\
     $d(v)$ & degree of vertex $v$ \\
     $C(u)$ & candidate set of vertex $u$ \\
     $C_{Algorithm}(u)$ & the candidate set of u in a specific Algorithm \\
     $C^{u_n,v_n}_u$ & candidate set of vertex $u$ while $un$ matches $vn$ \\
     $\Sigma$ & label set\\
     $L(u)$ & label of vertex u \\
     $\varphi$ & the set of enumerated vertices \\
     $f$ & mapping function f: $V(G_1) \rightarrow  V(G_2)$ \\
     $N(u)$ & neighbor vertices of vertex $u$ \\
    \hline
    \end{tabular}
    \label{tab:notation}
\end{table}

\subsection{Problem Statement}
This paper focuses on \textbf{Exact Non-Induced Subgraph Matching} problem, and the graph refers to undirected attributed graph. For the rest of the paper, we use subgraph matching and graph respectively. 

\textbf{Graph} Graph is represented as a triplet $G=(V, E, L)$, where $V$ is a set of vertices, $E$ is a set of edges, and $L$ is an attribute mapping function, which maps vertices to an attribute. For any edge $e \in E$, it consists of a pair of vertices $(v_i, v_j)$, where $v_i, V_j \in V$. Sometimes we will omit $L$ in $G$'s representation in writing as $G=(V,E)$.

\textbf{Subgraph} A graph $G'=(V', E')$ is a subgraph of graph $G=(V,E)$ that satisfies: 1) $V' \subseteq V$. 2) $E' \subseteq E$. And they share the same attribute mapping function.
 
\textbf{Exact Non-Induced Subgraph Matching} Given a data graph $G$ and a query graph $q$, exact subgraph matching requires that there is a bijective function $f$ between the query graph $q$ and the subgraph of data graph $G'$, where $G'=(V',E',L')$, $G' \subseteq G$, satisfying following constraints:

\begin{enumerate}
    \item f: $V(q) \rightarrow V(G')$.
    \item $\forall u \in V(q), L_q(u)= L_{G'}(v)$.
    \item $\forall (v_1, v_2) \in E(q) \rightarrow (f(v_1), f(v_2)) \in E(G')$.
\end{enumerate}

To further clarify, there could be multiple subgraph isomorphic embeddings mapping to the same subgraph, and they will be all considered as found solutions.

As shown in figure \ref{fig:SM-example}, the matching subgraph found in data graph for the query graph is $G'=\{V', E'\}$, where $V'=\{v_0, v_1, v_2\}$, $E'=\{\{v_0, v_1\}, \{v_0, v_2\}, \{v_1, v_2\}\}$. 

\begin{figure}[h]
  \centering
  \includegraphics[width=0.7\linewidth]{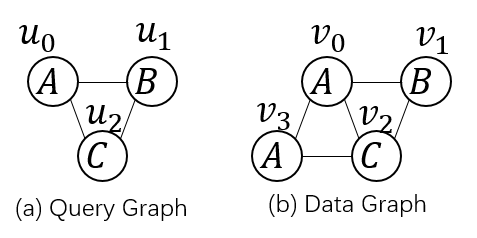}
  \caption{Example of subgraph matching problem with query graph and data graph}
  \Description{Example of subgraph matching}
  \label{fig:SM-example}
\end{figure}

The basic Depth-First-Search (DFS) process is elaborated in figure \ref{fig:DFS-example}. The whole process is a tree-based search. It start from $u_0$, and chooses a feasible matching vertex $v_0$. And it iterates until reaching the leaf vertices. The dotted lines are the branches that failed to match. After the search process completes, there is one solution found.

\begin{figure}[h]
  \centering
  \includegraphics[width=\linewidth]{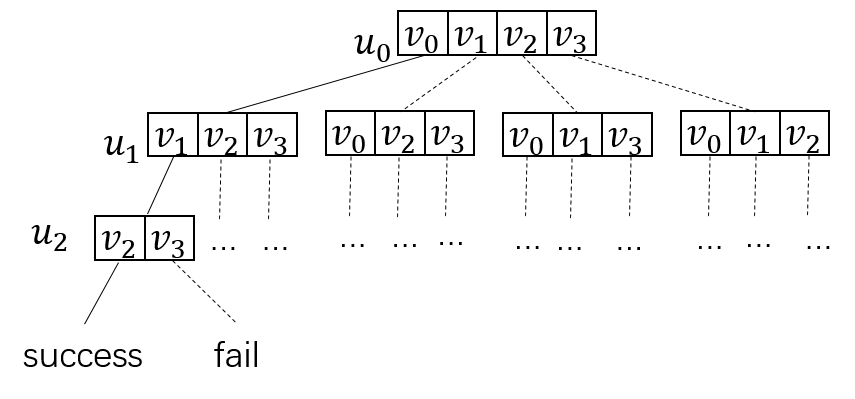}
  \caption{Example of search process in subgraph matching}
  \Description{Example of search process}
  \label{fig:DFS-example}
\end{figure}

\section{Related Work}
\textbf{Subgraph Matching} 
Subgraph matching is an NP complete problem. The Ullmann \cite{ullmann1976algorithm} algorithm uses a depth-first search strategy to enumerate subgraphs that match the vertices of the query graph in data graph. The \textit{enumeration based} algorithm usually consists of three phases: preproccessing, reordering and DFS (depth-first search). The DFS is also called enumeration process in this paper. 

Pruning is the most crucial optimization technique for enumeration-based method, which can be roughly categorized into static pruning and dynamic pruning. 

\textbf{Static Pruning}
It is unexceptional that constraints of neighboring vertices are considered as the pruning filter by existing subgraph matching algorithms. 
Versatile Equivalences \cite{kim2021versatile} uses neighbor-safety rules in pruning. GraphQL \cite{he2008graphs} extracts the neighbors around the vertices of the query graph and data graph, and constructs a bipartite graph to judge whether it has a semi-matching relationship.
And there are many pruning algorithms based on spanning tree $q_T$,like Turbo\textsubscript{iso} \cite{han2013turboiso}. 
With the increase of computer memory capacity, some algorithms using auxiliary arrays with the expense of larger space complexity have also been proposed. CECI \cite{bhattarai2019ceci} (Compact Embedding Cluster Index) designs an auxiliary data structure $\mathscr{A}$ with $O(|E(q)|*|E(G)|)$ space complexity. CFL \cite{bi2016efficient} designs a tree-structured auxiliary data structure $\mathscr{A}$ with $O(|V(q)|*|E(G)|)$ space complexity. Both methods are built upon  spanning trees. The spanning tree-based method requires at least one round of forward and backward updates. 

\textbf{Dynamic Pruning}
Based on Ullmann's algorithm, VF2 and VF3 \cite{cordella2004sub, carletti2017introducing} introduce the deletion of candidate vertices that are not adjacent to the matched vertices. Han \cite{han2019efficient} proposes a DAF algorithm with FS (failing set) technique that can be integrated in many \textit{enumeration based} algorithms. However, when the query graph is a dense graph, FS produces little effect. Furthermore, FS could be noneffective when the query graph is a complete graph. Versatile Equivalences\cite{} seeks dynamic equivalence to avoid searching for equivalent vertices. Recently, the mainstream enumeration method is set-intersection based, and this method will build an auxiliary data structures maintaining edge information, like Turbo\textsubscript{iso},DAF and CFL.

\textbf{Set Intersection} Set intersection plays an important role in subgraph matching algorithm designing. And it is also widely used in other graph algorithms such as triangle count \cite{mowlaei2017triangle}, maximum clique count \cite{blanuvsa2020manycore}. QFilter \cite{han2018speeding} is proposed to speed up set intersection with BSR structure and SIMD instruction. In this paper, we choose Qfilter to carry out the intersection operation.






\section{Algorithm Overview}
In this paper, we propose a new strategy for solving subgraph matching problem with introduction of ACCS(Assignment under Conditional Candidate Set) problem. 

First, CS (Candidate Set) and CCS (Conditional Candidate Set) can be obtained through polynomial time during preprocessing. 
Before solving the problem directly, we reduce the problem and turn it into a ACCS Problem and assign values to the vertices of query graph through CS and CCS. To solve ACCS,  data graph is no longer needed after construction of CS and CCS.
Dynamic Graph Editing (DGE) is proposed to  manipulate the query graph to speedup searching process. We present DGEE for solving ACCS, which is a new enumeration-based algorithm that integrates DGE and FS. And we will prove that we obtain the same results of the subgraph matching problem with given query graph $q$. 

The proposed subgraph matching algorithm consists three parts: Preprocessing, Reordering and Enumerating as shown in algorithm \ref{alg:pipeline}. These three modules are executed sequentially.

\begin{algorithm}
\SetAlgoLined
\KwResult{All matches }
 $Preprocessing(q, G)$\;
 $Reordering(q)$\;
 $Enumerating(q, G)$\;
 \Return matches\;
 \caption{Main algorithm($q, G$)}
 \label{alg:pipeline}
\end{algorithm}

\textbf{Preprocessing.} In this module, we propose a fast algorithm fGQL. Candidate set $C(u)$ of each vertex $u$ in query graph $q$ and conditional candidate set $C^{u_n, v_n}_u$ will be calculate. And this part is also called static pruning, since the candidate vertices removed in this part do not in the enumeration process. The details will be discussed in Static Pruning Section. 

\textbf{Reordering.} In this module, we use RI strategy. It is a part of dynamic pruning, and it can affect the depth of the pruning that occurs in search process. The paper \cite{sun2020memory} shows RI still an efficient reordering method. First, it selects maximum degree vertex as root $u_0$. Then it iteratively select $u_k = \arg\max_{u \in N(\varphi)-\varphi}{|N(u) \bigcup{\varphi}|}$ at level $k$, where $\varphi = \{u_0, u_1, ..., u_{k-1}\}$. And it breaks ties by: 

1) $u_k = \arg\max_{u \in N(\varphi)-\varphi}{\{|N(N(u) \bigcap{V(q)-\varphi}) \bigcap{\varphi}|\}}$. 

2) $u_k = \arg\max_{u \in N(\varphi)-\varphi}{\{|(N(u)-\varphi)\bigcap{N(\varphi)}|\}}$.

\textbf{Enumerating.} In this module, we propose DGEE. It is the main part of the algorithm. It will maintain a valid candidate set for each level. And the valid candidate set is defined as: $VC(u)=\bigcap_{u_n\in N(u)}{C^{u_n,v_n}_u}$. Since the main operation is set intersection, it can be accelerated by QFilter \cite{han2018speeding} with BSR (Base and State Representation) data structure and SIMD instructions. In addition, at each level, the algorithm will dynamically edit the query graph by deleting and adding edge to decrease search space, while ensuring that all solutions can be found.

\section{Static Pruning}
In this section, we propose an efficient static pruning algorithm based on GQL \cite{he2008graphs}, called fast GQL (fGQL). The algorithm adapts two check rules to remove the infeasible candidate vertices and two locality strategies to reduce redundant checks.



\subsection{Rule 1: Neighbor Label Frequency}
Neighbor Label Frequency (NLF) is used to remove infeasible candidate vertices from the full set $V(G)$ in a coarse-grained manner.

It will count the label frequency of the neighbors of each vertex $u$ and the label frequency of the neighbors of the candidate vertex $v$ in $V(G)$, so that each label frequency of the latter is no less than the former.

The candidate set can be formulated as: $C_{NLF}(u)=\{v | v\in V(G), L(v)=L(u), |N(u,l)| \leq |N(v,l)|, \forall l \in \Sigma \}$. The $N(u,l)$ can be built in reading graph stage with $O(|V(G)|* |\Sigma|)$ space complexity. 
And the time complexity of calculating candidate set is $O(|V(q)|*|V(G)|*|\Sigma|)$. 
In terms of implementation, the set of corresponding vertex for each label in the data graph can be reprocessed. In this way, if the number of each label is uniform, the time complexity can be reduced to $O(|V(q)|*|V(G)|)$.

\subsection{Rule 2: Neighborhood Matching}
Neighborhood Feasible is used to remove infeasible vertices from the complete set $V(G)$ in a fine-grained manner.

It will count the neighbors of each vertex $u$ and the neighbors of the candidate vertex $v$ in $u$' candidate set and get a bipartite graph. The left part is $N(u)$, and the right part is $N(v)$. For vertex $u_n \in N(u)$ and $v_n \in N(v)$, there is an edge $(u_n. v_n)$ if $v_n \in C_{fGQL}(u)$. And the set $C_{fGQL}(u)$ is initialized by $C_{NLF}(u)$.
$SemiMatching$ is used to find a matching relationship between the two neighbor sets can be satisfied. 



The candidate set can be formulated as: $C_{fGQL}(u) = \{ v| v \in C_{NLF}(u)$,  $ SemiMatching (N(u), N(v),$ $E=\{(u_n, v_n)| v_n \in C_{fGQL}(u_n)\}) \}$ after converged, where $SemiMatching(V1,V2,E)$ is used to judge whether the bipartite graph matches. It is an iterative algorithm, and the expected time complexity is $O(k*|C_{NLF}|*\max d(u)*\max d(v))$, where $k$ is iterate number, $v \in V(q)$ and $u \in V(G)$ . In real-graph testing, the $k$ usually is small, and can be regarded as a constant. 



\subsection{fast GQL}
\textit{SemiMatching} will occupy almost all computational load. And it takes $O(|V(q)|^2*\max {d(v)})$ time complexity. Higher performance can be achieved by reducing or accelerating \textit{SemiMatching} calculation.

Through the principle of locality, there are two types redundant operations that can be avoided in GQL after an iteration. First, for spatial locality of the removed candidate vertices, if the candidate sets of $u$ and its neighbors $N(u)$ are not updated in last iteration, there will be a repetitive and ineffective operation for these vertices. 
Second, for temporal locality of the maintained candidate of vertex $u$, the $SemiMatching$'s result can be same between two iterations.

So two techniques are used to solve two redundant operations: \textbf{Neighborhood Update} and \textbf{pre-check}. 
First, only the neighbor vertices of updated vertex need to be considered in next iteration. It can reduce the number of calls to \textit{SemiMatching}. Second, store last matching information and do a pre-check before the next matching. And it can accelerate \textit{SemiMatching} process.

The proposed fast algorithm is called fGQL in algorithm \ref{algo:fGQL}. The first line of the code is the result of NLF. 
Line 2 initializes the queue for all vertices. 
Lines 3-6 have similar functions to lines 2-3 in GQL, enumerating $(u,v)$ pair. 
Lines 7-9 use the last matching result as a pre-check, which is to avoid the second redundancy situation mentioned above. 
The process of saving information occurs in Line 16. 
Lines 12-14 add $u$ and neighbors $N(u)$ to the queue for subsequent iteration after $C(u)$ updated.
This is to tackle the first redundancy operation. 
Meanwhile, Line 12 is used to avoid repeated update. Since more than one $v$ may be removed from $C(u)$ in one iteration, but the update only needs to be done once.

In \textit{pre-check}, given a $(u,v)$ pair, last matching result of the left part in bipartite graph will be stored for looking up. The number of pairs is $O(|V(q)|* max C(u))$. The number of vertices of the left part is $O(|V(q)|)$. So the space is $O(|V(q)|*\max {C_{fGQL}(u)} * |V(q)|)$. By looking up the last matching result for each vertex in the left part, the time complexity can be $O(|V(q)|)$. 
After passing the \textit{pre-check}, it can save a lot time for $Semi-Match$ test.


\begin{algorithm}
\SetAlgoLined
\KwResult{The candidate set after fGQL}
 $C_{fGQL} = C_{NLF}$\;
 queue = $V(q)$  \;
 \While {queue is not $\emptyset$}{
    u = queue.front()\;
    queue.pop()\;

     \For{$v$ in $C_{fGQL}(u)$} {
        \If{pre-check last matches}{
            \textbf{Continue}\;
        }
        \eIf{not $SemiMatching(N(u), N(v), E=\{(u_n, v_n)| v_n \in C_{fGQL}(u_n)\})$}{
            $C_{fGQL}(u) -= \{v\}$\;
            \If{$u$ is unupdated in this iterate}{
                add $\{N(u), u\}$ in queue\;
            }
        }{
            store matching information\;
        }
     }
 }
 \caption{fGQL(q,G)}
 \label{algo:fGQL}
\end{algorithm}

In addition, in theory, we analyzed and compared it with other popular static pruning algorithms.

\textbf{CFL.} The candidate set is defined as: $C_{CFL}(u) = \{ v |v \in C_{NLF}(u) $ $\bigcap_{u_n \in N(u)}{N(C_{CFL}(u_n))}, C_{CFL}(u_n) \cap{N(v)} \neq \emptyset  \} $ after converged. It is an iterative algorithm and the expected time complexity is $O(k * |E(q)| * |E(G)||)$, where $k$ is iterate number. The default value is 1.  

\begin{theorem}
$C_{GQL}(u) \subseteq C_{CFL}(u).$
\end{theorem}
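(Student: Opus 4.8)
The plan is to exhibit both $C_{GQL}$ and $C_{CFL}$ as fixed points of monotone, vertex-removing refinement operators that are both initialized from $C_{NLF}$, and then to show that one GQL refinement step is uniformly stronger than one CFL step. First I would recast the two definitions as operators on a candidate assignment $\mathcal{C} = (C(u))_{u \in V(q)}$. Define $F_{GQL}(\mathcal{C})(u)$ to retain exactly those $v \in C(u)$ for which $SemiMatching(N(u), N(v), E_{\mathcal{C}})$ holds, with $E_{\mathcal{C}} = \{(u_n,v_n) : u_n \in N(u),\ v_n \in N(v) \cap C(u_n)\}$; and define $F_{CFL}(\mathcal{C})(u)$ to retain exactly those $v \in C(u)$ with $C(u_n) \cap N(v) \neq \emptyset$ for every $u_n \in N(u)$. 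Both operators only delete vertices, so $F(\mathcal{C}) \subseteq \mathcal{C}$, and both are monotone in $\mathcal{C}$, since enlarging the input can only enlarge $E_{\mathcal{C}}$ and the intersections $C(u_n) \cap N(v)$, hence only make the retention conditions easier to satisfy. By the definitions given in the excerpt, $C_{GQL}$ and $C_{CFL}$ are the converged results of iterating $F_{GQL}$ and $F_{CFL}$ respectively, both starting from $C_{NLF}$.

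The combinatorial core is a pointwise domination lemma for a fixed input: for every $\mathcal{C}$ and every $u$, $F_{GQL}(\mathcal{C})(u) \subseteq F_{CFL}(\mathcal{C})(u)$. This holds because a semi-matching saturating $N(u)$ assigns to each $u_n \in N(u)$ a distinct partner $v_n \in N(v) \cap C(u_n)$; in particular each such intersection is nonempty, which is exactly the CFL retention condition. Thus whenever $v$ survives the GQL step it also survives the CFL step under the same input $\mathcal{C}$.

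I would then transfer this from a single step to the fixed points by a synchronized induction. Setting $\mathcal{C}^{(0)}_{GQL} = \mathcal{C}^{(0)}_{CFL} = C_{NLF}$ and $\mathcal{C}^{(t+1)} = F(\mathcal{C}^{(t)})$, I claim $\mathcal{C}^{(t)}_{GQL} \subseteq \mathcal{C}^{(t)}_{CFL}$ for all $t$. The base case is equality. For the step, the domination lemma gives $F_{GQL}(\mathcal{C}^{(t)}_{GQL}) \subseteq F_{CFL}(\mathcal{C}^{(t)}_{GQL})$, while monotonicity of $F_{CFL}$ with the induction hypothesis gives $F_{CFL}(\mathcal{C}^{(t)}_{GQL}) \subseteq F_{CFL}(\mathcal{C}^{(t)}_{CFL})$; chaining these yields $\mathcal{C}^{(t+1)}_{GQL} \subseteq \mathcal{C}^{(t+1)}_{CFL}$. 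Passing to the converged limits gives $C_{GQL}(u) \subseteq C_{CFL}(u)$ for every $u$, as required.

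I expect the main obstacle to be the fixed-point bookkeeping rather than the combinatorial core. The domination lemma is immediate once the semi-matching condition is read as "saturating $N(u)$", but one must argue carefully that both iterations start from the same $C_{NLF}$ and that the inclusion persists across steps even when the two processes stabilize after different numbers of iterations. The subtlety to watch is the self-referential nature of both definitions, since $E_{\mathcal{C}}$ and the neighbor intersections both refer to the candidate sets being defined; this is precisely why the argument must be phrased as monotone iteration to a fixed point rather than as a one-shot inclusion. As an alternative that sidesteps the synchronized induction, I could instead observe that the GQL fixed point, being fixed under $F_{GQL}$, is by the domination lemma and deflationarity also a fixed point of $F_{CFL}$ lying below $C_{NLF}$, and is therefore contained in the greatest such fixed point, namely $C_{CFL}$.
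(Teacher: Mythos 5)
Your proof is correct, and its combinatorial core is the same as the paper's: if $v$ survives GQL's test, the semi-matching saturating $N(u)$ hands every $u_n \in N(u)$ a partner lying in $N(v) \cap C(u_n)$, and nonemptiness of these intersections is exactly CFL's retention condition (you silently drop CFL's other condition $v \in \bigcap_{u_n \in N(u)} N(C(u_n))$, but on undirected graphs it is equivalent to the nonempty-intersection condition, so nothing is lost). Where you genuinely differ is in how the self-referential definitions are handled. The paper argues in one shot at the level of elements: it takes the semi-matching witness $f$ and verifies the two CFL conditions with $C_{NLF}(u_n)$ substituted for $C_{CFL}(u_n)$, i.e., it proves $C_{NLF}(u_n) \cap N(v) \neq \emptyset$ and $v \in N(C_{NLF}(u_n))$. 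Since $C_{CFL}(u_n) \subseteq C_{NLF}(u_n)$, these are weaker statements than the conditions defining the converged $C_{CFL}$, so the paper's argument is airtight only under its one-iteration reading of CFL (its stated default $k=1$); for the converged sets it leaves precisely the fixed-point gap you flag as the ``subtlety to watch.'' Your synchronized induction closes that gap: the domination lemma plus monotonicity of $F_{CFL}$ keeps every GQL iterate inside the corresponding CFL iterate, so the partners produced by the semi-matching do lie in the \emph{current} CFL sets, and the inclusion passes to the limits even if the two iterations stabilize after different numbers of steps. Your alternative ending --- that $C_{GQL}$ is, by domination and deflationarity, itself a fixed point of $F_{CFL}$ below $C_{NLF}$, hence contained in the greatest such fixed point, which is $C_{CFL}$ --- is also sound and arguably the cleanest formulation. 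In short: you prove the inclusion by the same local argument as the paper, but your monotone-iteration framing is more rigorous than, and strictly subsumes, the paper's one-shot proof.
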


\begin{proof}
These two methods are iterative algorithms, and their starting set is $C_{NLF}(u)$. And we only need to prove that each element in $C_{GQL}(u)$ is still in $C_{CFL}(u)$. 

For $\forall v \in C_{GQL}(u),$ $v_n \in N(v), u_n \in N(u)$. $v_n$ is an arbitrary neighbor of $v$, and $u_n$ is an arbitrary neighbor of $u$. There is a mapping $f$ from vertex set \{u, N(u)\} to set \{v, N(v)\}.

Now we need to prove that 

1) $v \in \bigcap_{u_n \in N(u)}{N(C_{CFL}(u_n))}$ for $\forall u_n \in N(u)$, 

2) $C_{CFL}(u_n)$ $\cap{N(v)} \neq \emptyset$, for $\forall u_n \in N(u)$.

We have $f(u_n) \in C_{NLF}(u_n)$ , $f(u_n) \in N(v)$, and $v \in N(f(u_n))$. 

Then, it can be written as $f(u_n) \in C_{NLF}(u_n)\cap N(v) $.

So, $C_{NLF}(u_n)\cap N(v) \neq \emptyset$. The Part 2 is proven.

Since $f(u_n) \in C_{NLF}(u_n)$ and $v \in N(f(u_n))$, we have $v \in N(C_{NLF}(u_n))$. 

Therefore, $v \in C_{CFL}(u)$. The Part 1 is proven.

Hence, $C_{GQL}(u) \subseteq C_{CFL}(u)$.
\end{proof}

In fact, the collection obtained by fGQL is still a subset of latest static pruning algorithm like CECI and DAF. So, for now, it is the strongest pruning technique in the most practical algorithm.


\section{Dynamic Graph Editing in Dynamic Pruning}
In this section, we first introduce set-intersection-based enumeration.
And then, on this basis, the subgraph matching problem can be reduced to the ACCS Problem. Then put forward the DGE technique in ACCS Problem. Finally, a new enumeration algorithm DGEE is given. 

The dynamic pruning includes reordering and enumeration process. However, the reordering is not the focus of our research. Therefore, RI is used as reordering method in our algorithm.

\subsection{Set-Intersection based enumeration}
In this subsection, we will introduce the basic workflow and two important sets.

Most of the advanced enumerated-based subgraph matching algorithms can be classified as Set-Intersection based method. It can be described as algorithm \ref{algo:set-intersection-based}. Lines 1-4 indicate that the matching subgraph has been found. Line 5 is to update the Valid Candidate Set. And Lines 6-9 are to enumerate candidate vertex in data graph and enter the next layer of the search.

\begin{algorithm}
\SetAlgoLined
\KwResult{The matching results}
 \If{$k=|V(Q)|$ } {
    Report it\;
    \Return\;
 }
 ${VC}_{u_k} = \bigcap_{u_n \in N(u_k) \cap{\varphi}}{C^{u_n,v_n}_{u_k}}$\;
 \For{$v_k \in {VC}_{u_k} $ } {
    $M' = M \cup \{(u_k, v_k)\}$\;
    Enumerate(q, G, k+1)\;
 }
 \caption{Enumerate(q,G,k) }
 \label{algo:set-intersection-based}
\end{algorithm}

In enumeration process, there are two important sets Valid Candidate Set and Conditional Candidate Set. Although it is a set, it is implemented with an array. 

\textbf{Valid Candidate Set. } This set can satisfy that the edge constraint among the enumerated vertices $\varphi$. For each vertex, it is calculated by the neighbor's CCS (Conditional Candidate Set). During the enumeration process, only the neighbors that have been enumerated are needed. It is calculated as: $VC(u) = \bigcap_{u_n\in (N(u)\cap \varphi)}{C^{(u_n,v_n)}_u}$. And $VC(u_0)=C(u_0)$, where $u_0$ is the start vertex in searching.

\textbf{Conditional Candidate Set. } It represents the candidate set of vertex $u$ under the condition of a given matching relationship of $(u_n, v_n)$, where $u_n$ is the $v$'s neighbor and $v_n$ is the corresponding matched vertex of $u_n$. And it is defined as: $C^{u_n,v_n}_u = C(u) \cap {N(v_n)}$. Naturally, it satisfies the edge constraint. The most important thing here is to do the intersection operation with $N(v_n)$, which is an critical condition for the establishment of the DGE lemma. 

In enumeration process, the algorithm in this paper takes RI's query plan and then starts a depth-first search from the start vertex $u_0$. In each level of the search process, the algorithm will 1) update the current VC, 2) select a candidate vertex from the VC, and then 3) enter the next layer. When the VC at a level is empty, it means that the branch can't find a solution, and it backtracks to its previous state. 

\subsection{ACCS Problem}
In this subsection, we will reduce the subgraph matching problem into ACCS Problem. It is a new problem we proposed. DGE is based on this reduced problem. 

After preprocessing, the Conditional Candidate Set will be stored in an array, and our algorithm only needs the information of the query graph, not the data graph, based on algorithm \ref{algo:set-intersection-based}. So the problem can be reduced to: 

\begin{problem}[Assignment under Conditional Candidate Set]
    Given candidate set, conditional candidate set and query graph $q$, for each vertex $u_i$ in graph $q$, assign a value $v_i$ to $u_i$ from $C(u)$, which satisfies 1) $ v_i \in \bigcap_{u_n \in N(u_i)}{C^{u_n, v_n}_{u_i}}$,  and 2) $v_i \neq v_j$, $\forall 0 \leq j \leq |V(q)|-1$, $j \neq i$. 
    \label{problem: NewProblem}
\end{problem}

Below we give the proof that the solution of Problem \ref{problem: NewProblem} is same as subgraph matching problem's.

\begin{proof}
Let subgraph matching problem's solutions be $S$, and $s=\{(u_i,v_i)$, $i=0,...,|V(q)|-1\}$ is one solution. ACCS Problem's solutions be $P$, and $p=\{(u_i, v_i), i=0,...,|V(q)|-1\}$ is one solution. We use $s_i$ to represent $v_i$ in $s$, and $p_i$ to $v_i$ in $p$, since their $u_i$ is equal.

There are three constraints the in ACCS Problem: 1)  $p_i \in C(u_i)$, 2) $p_i \in \bigcap_{u_n \in N(u_i)}{C^{u_n, p_n}_{u_i}}$. 3) $p_i \neq p_j$, $\forall 0 \leq i,j \leq |V(q)|-1$, $i \neq j$.

The subgraph matching problem's constraints are: 1) $s_i \in C(u_i)$ 2) $s_i \in N(s_j)$, $\forall u_j \in N(u_i)$. The former contains matching constraints on vertex, and the latter contains matching constraints on edges. 3) $s_i \neq s_j \forall 0 \leq i,j \leq |V(q)|-1$, $i \neq j$.

Their first and third constraints are same, and we only need to prove that the second constraint can be deduced from each other combined with known conditions. 

a) Now, we begin to prove that $p$ is one solution in $S$, i.e.,  $p_i \in \bigcap_{u_n \in N(u_i)}{C^{u_n, v_n}_{u_i}}$ $\Longrightarrow$ $p_i \in N(p_j)$, $\forall u_j \in N(u_i)$.

Because of Equation (1), and $p_i \in \bigcap_{u_n \in N(u_i)}{C^{u_n, p_n}_{u_i}}$.

\begin{equation}
\begin{aligned}
     \bigcap_{u_n \in N(u_i)} C^{u_n,p_n}_{u_i} &= \bigcap_{u_n \in N(u_i)} C(u_i) \cap {N(p_n)} \\ 
    &=  C(u_i) \bigcap_{u_n \in N(u_i)} {N(p_n)}  \\
    &\subseteq \bigcap_{u_n \in N(u_i)} {N(p_n)} \\
\end{aligned}
\end{equation}

We have $p_i \in N(v_n)$ for $\forall u_n \in N(u_i)$. So this part is proven.

b) Now, we begin to prove that $s$ is one solution in $P$, i.e.,  $s_i \in N(s_j)$, $\forall u_j \in N(u_i)$ $\Longrightarrow$ $s_i \in \bigcap_{u_n \in N(u_i)}{C^{u_n, s_n}_{u_i}}$.

From $s_i \in N(s_n)$, for $\forall u_n \in N(u_i)$.  we have 

\begin{equation}
s_i \in \bigcap_{u_n \in N(u_i)} {N(v_n)}
\notag
\end{equation}

Because of $s_i \in C(u_i)$ and the equation above, we can get that 

\begin{equation}
s_i \in C(u_i) \bigcap_{u_n \in N(u_i)} {N(v_n)}
\notag
\end{equation}

Therefore, 

\begin{equation}
s_i \in \bigcap_{u_n \in N(u_i)}{C^{u_n, s_n}_{u_i}}
\notag
\end{equation}


So this part is proven.

Thus, the second constraint can be deduced from each other. And the solution of Problem \ref{problem: NewProblem} is same as subgraph matching problem's solution.

\end{proof}

The difference between the ACCS Problem and the subgraph matching problem is that the information of the data graph is replaced with CCS, which weakens the strict matching constraints of edges, making edge deleting technique feasible. Moreover, in addition to neighbor information, CCS can also introduce more pruning techniques in future research based on its definition. 

\subsection{Dynamic Graph Editing}
In this subsection, we will introduce several basic concepts of Dynamic Graph Editing like Edited Graph, Edge adding, Edge deleting. And then, we propose the Dynamic Graph Editing theorem. To demonstrate the workflow, we use an example. 

In subgraph matching, the role of edges is to transport vertex-to-vertex constraints. Intuitively, the fewer edges in a graph, the fewer constraints, and the easier the matches will be find. 

More generally, if an edge cannot transport a vertex-to-vertex constraints, the edge can be deleted. On the other hand, if there is vertex-to-vertex constraints without an edge connected, we can insert an edge to the graph. 

DGE is proposed to reduce edges. But due to the another constraint of subgraph matching that the vertex in the data graph can only be mapped by one vertex in query graph, some edges must be added in this scope in order to satisfy the constraint. 

\begin{figure*}[h]
  \centering
  \includegraphics[width=\linewidth]{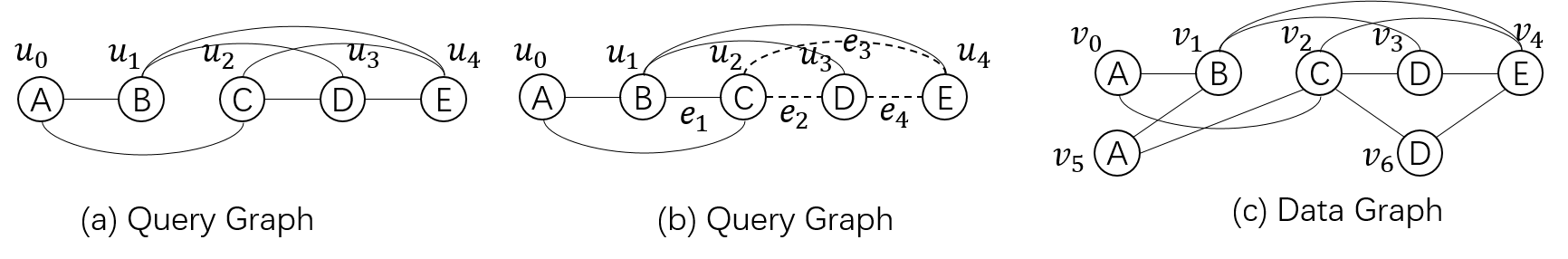}
  \caption{Example of DGE. (a): the original query graph; (b): an edited graph in searching processing; (c): the data graph.}
  \Description{Example of DGE}
  \label{fig:DGE-example}
\end{figure*}

\begin{algorithm}
\SetAlgoLined
\KwResult{Update edited graph and VC}
 $E(eg_k) = E(eg_{k-1})$\;
 \For( //calculate $E_{del}$) {$u_n \in N(u) - \varphi_k$} {
    $VC(u_n) = C^{u,v}_{u_n} \bigcap VC(u_n)$\;
    \If{size of $VC(u_n)$ changed} {
        $E(eg_k) -= {(u,u_n)}$\;
    }
 }
 \For(//calculate $E_{add}$){$u_n \in V(q) - \varphi_k$}{
    \If{$v_k \in VC(u_n)$}{
        $VC(u_n) -= {v_k}$\;
        $E(eg_k) += {(u,u_n)}$ \; 
    }
 }

 \caption{DGE($eg_k, {VC}_u, u, \varphi_k$)}
 \label{algo:DGE}
\end{algorithm}

In DGE, there are three concepts: Edited Graph, Edge adding Rule and Edge deleting Rule. 

\textbf{Edited Graph. } We call it Edited Graph after editing edges in query graph. After enumerating a new vertex $u$, a candidate vertex $v$ will be determined. Based on this new knowledge, we can review the edges information and edit graph to reduce the size of the subsequent search space. This process is called DGE (Dynamic Graph Editing). We use $eg_k$ to represent the Edited Graph at level $k$ in DFS (depth-first search). $V(eg_k) = V(eg_{k-1})$ and $E(eg_k) = E(eg_{k-1}) + E_{add} - E_{del}$. $E_{add}$ and $E_{del}$ are calculated by Edge adding and Egde deleting respectively. And they are two cores of DGE technique. 

\textbf{Edge adding Rule. } The adding edge set is defined as: $E_{add}$ = $\{ (u_i$, $u_j) |$ $u_i \in \varphi$,$u_j \in V(q)-\varphi$,  $\exists v_i \in VC(u_j) \}$. Here $u_i$ is the matched vertex, the mapping is $v_i$. And $u_j$ is the unmatched vertex. $VC(u_j)$ needs special explanation, it is not a constant. It will be mentioned together after explaining Edge deleting. The adding edge set's definition means that used vertex in data graph cannot be mapped by other vertices. In fact, such edges have been implicitly added to the graph in previous enumeration algorithms. Meanwhile, this is also the lowest constraints of $E_{add}$. Such an operation is possible to turn the graph into a complete graph, but it is unavoidable. 

\textbf{Edge deleting Rule.} The deleting edge set is defined as: $E_{del}$ = $\{(u_i$, $u_j) $ $| u_i \in \varphi, VC(u_j) \cap{C^{u_i,v_i}_{u_j}} = VC(u_j)$, $u_j \in V(q) - \varphi \}$.  Here $u_i$ is the matched vertex, the mapping is $v_i$. And $u_j$ is the unmatched vertex. It needs to meet the condition $VC(u_j) \cap{C^{u_i,v_i}_{u_j}} = VC(u_j)$.
It means that if an edge does not affect the set of feasible candidate vertices of another vertex, then this edge can be safely removed from the graph. 

There is a problem with the value of $VC(u_j)$ in Edge adding and Edge deleting. 
By definition, when the mapping of $u \in \varphi$ is determined, $VC(u_j)$ can be calculated as: $VC(u_j) = \bigcap_{u_i \in \varphi \cap{N(u_j)}}{C^{u_i,v_i}_{u_j}}$. But under this definition, $E_{del}$ will be an empty set. In fact, the order of intersection  process will effect the $E_{add}$ and $E_{del}$. 
Since $E_{add}$ and $E_{del}$ are iterative, so we let 

$VC(u_j) = \bigcap_{u_i \in (\varphi \cap{N(u_j)} - \{u_{|\varphi|-1}\})}{C^{u_i,v_i}_{u_j}}$. The index of vertices starts from 0, and the first vertex is $u_0$. Here ,the last matching vertex $u_{|\varphi|-1}$ in the intersection operation is removed. 

In this case, $E_{add} = \{ (u_i$, $u_j) |$ $u_i =u_{|\varphi|-1} $,$u_j \in V(q)-\varphi$,  $\exists v_i \in VC(u_j) \}$,  $E_{del}$ = $\{(u_i$, $u_j) $ $| u_i=u_{|\varphi|-1}, VC(u_j) \cap{C^{u_i,v_i}_{u_j}} = VC(u_j)$, $u_j \in V(q) - \varphi \}$. 

Based on the ACCS Problem, we propose the lemma Dynamic Graph Editing as follows.

\begin{theorem} [Dynamic Graph Editing]
In ACCS Problem, given the assigned vertices set $\varphi_k=\{u_0,u_1,..,u_{k-1}\}$, the solutions of $g$ is equal to $eg_{k}$'s solutions. 
\label{lemma:edit-graph}
\end{theorem}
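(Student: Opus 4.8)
The plan is to prove the statement by induction on the level $k$, using the fact (established just above) that ACCS on a graph has exactly the subgraph-matching solutions, so it suffices to track how the editing operation changes the ACCS constraints. Since $E(eg_k) = E(eg_{k-1}) + E_{add} - E_{del}$ and the edits at this level are all incident to the newly fixed vertex $u_{k-1}$ (mapped to $v_{k-1}$), I would reduce the theorem to a single-step claim: among all full assignments of $V(q)$ that agree with $\varphi_k$, the ones satisfying the three ACCS constraints relative to $eg_{k-1}$ coincide with those satisfying them relative to $eg_k$. Chaining this with the inductive hypothesis (solutions of $g = eg_0$ agree with solutions of $eg_{k-1}$) then yields the result, with base case $eg_0 = g$ being trivial. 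Throughout, I would treat $E_{del}$ and $E_{add}$ separately and note that they act on disjoint vertex pairs — $E_{del}$ on existing neighbors $u_n \in N(u_{k-1}) - \varphi_k$, and $E_{add}$ on non-neighbors $u_n \in V(q) - \varphi_k$ with $v_{k-1} \in VC(u_n)$ — so the two operations compose without interference.

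For the deletion part, I would argue that each edge in $E_{del}$ carries a redundant constraint. For $(u_{k-1}, u_j) \in E_{del}$ the defining condition $VC(u_j) \cap C^{u_{k-1},v_{k-1}}_{u_j} = VC(u_j)$ gives $VC(u_j) \subseteq C^{u_{k-1},v_{k-1}}_{u_j} = C(u_j) \cap N(v_{k-1})$, where $VC(u_j)$ is the valid candidate set built from the matched neighbors of $u_j$ other than $u_{k-1}$ (the iterative definition). In any assignment extending $\varphi_k$ that satisfies its remaining neighbor constraints we have $v_j \in VC(u_j)$, hence $v_j \in N(v_{k-1})$ holds automatically; therefore dropping the edge $(u_{k-1}, u_j)$ neither removes a genuine solution nor admits a spurious one, since those remaining constraints persist unchanged in both $g$ and $eg_k$. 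This direction is the clean part and I expect it to go through essentially as written.

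For the addition part, for each $(u_{k-1}, u_j) \in E_{add}$ we have $v_{k-1} \in VC(u_j)$ and the step performs $VC(u_j) \mathrel{-}= \{v_{k-1}\}$, i.e. it enforces only $v_j \neq v_{k-1}$; since ACCS constraint~(3) already forbids $v_j = v_{k-1}$, making this explicit changes neither the solution set of $g$ nor that of $eg_k$. The main obstacle lies precisely here, in reconciling this intended \emph{distinctness} meaning with the literal semantics of an edge under the CCS definition: reading the new edge $(u_{k-1}, u_j)$ through $C^{u_{k-1},v_{k-1}}_{u_j} = C(u_j) \cap N(v_{k-1})$ would impose the \emph{adjacency} requirement $v_j \in N(v_{k-1})$, which is strictly stronger than disequality and would discard legitimate solutions having $v_j \neq v_{k-1}$ but $v_j \notin N(v_{k-1})$. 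I would therefore need to fix the semantics of added edges so that they contribute only constraint~(3) (equivalently, that $VC(u_j)$ at deeper levels is updated by the set-difference $VC(u_j) - \{v_{k-1}\}$ rather than by intersection with $N(v_{k-1})$), and verify that this convention is consistent with the recomputation of every $VC$ below level $k$. Establishing that invariant — that the valid candidate sets computed under $eg_k$ equal those under $g$ once distinctness is accounted for — is the crux; once it is in place, combining the deletion and addition arguments closes the inductive step and gives the claimed equality of solution sets.
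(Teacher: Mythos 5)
Your proposal takes essentially the same route as the paper's own proof: induction on $k$ reduced to a single-step equivalence between $eg_{k-1}$ and $eg_k$, with deletion handled exactly as you describe --- the rule $VC(u_j) \subseteq C^{u_{k-1},v_{k-1}}_{u_j}$ makes the deleted constraint redundant, so intersecting over $N_{k-1}+V_{add}-V_{del}$ gives the same set as intersecting over $N_{k-1}+V_{add}$. The one point where you pause is the point where the paper is weakest, and your instinct is correct: the paper's corresponding step (``by utilizing the third condition, we can rewrite the second condition as $s_i \in \bigcap_{u_n \in N_{k-1}(u_i)+V_{add}}{C^{u_n,v_n}_{u_i}}$'') is valid \emph{only} under the convention you propose, namely that an added edge $(u_{k-1},u_j)$ carries just the disequality constraint $v_j \neq v_{k-1}$ (i.e., its effective CCS is $C(u_j)-\{v_{k-1}\}$) rather than the literal CCS $C(u_j)\cap N(v_{k-1})$; under the literal reading, distinctness does not imply adjacency and that rewrite simply fails, exactly as you observe. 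The paper never states this convention in the proof, but its operational algorithms confirm it: both the DGE procedure and the Update procedure implement edge addition as the set difference $VC(u_j)-\{v_{k-1}\}$, never as an intersection with $N(v_{k-1})$. So your planned ``fix the semantics of added edges'' is not a departure from the paper --- it is the missing clause that makes the paper's part~(a) rigorous. Once you state that convention explicitly (and note that $VC$ computations at deeper levels under $eg_k$ treat added edges by set difference, which is what the enumeration code actually does), your deletion and addition arguments close the inductive step and recover the theorem; in that completed form your proof and the paper's coincide.
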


\begin{proof}
Since $eg_0=g$, we only need to prove that $eg_k$'s solutions is equal to $eg_{k-1}$'s solutions. And it can be proved by induction that the solutions of $eg_k$ and $q$ are the same.

Let $eg_{k-1}$'s solutions be $S$, and $s=\{(u_i,s_i)| i=0,...,|V(eg_{k-1})|-1\} $ is one solution, $eg_k$'s solutions be $S'$, and $s'=\{(u_i, s'_i)| i=0,...,|V(eg_k)|-1\}$ is one solution, where $V(eg_{k-1})=V(eg_k)=V(q)$.

There are three constraints of $eg_{k-1}$'s solution: 1)  $s_i \in C(u_i)$, 2) $s_i \in \bigcap_{u_n \in N_{k-1}(u_i)}{C^{u_n, s_n}_{u_i}}$. 3) $s_i \neq s_j \forall 0 \leq i,j \leq |V(q)|-1$. 3) $s_i \neq s_j \forall 0 \leq i,j \leq |V(q)|-1$.

There are three constraints of $eg_k$'s solution: 1)  $s'_i \in C(u_i)$, 2) $s'_i \in \bigcap_{u_n \in N_k(u_i)}{C^{u_n, s'_n}_{u_i}}$. 3) $s'_i \neq v_j \forall 0 \leq i,j \leq |V(q)|-1$. 3) $s'_i \neq s'_j \forall 0 \leq i,j \leq |V(q)|-1$.

And $N_k = N_{k-1} + V_{add} - V_{del}$. Since the edges here have the same vertex $u_k$ in $E_{add}$ and $E_{del}$, for simplicity, the set of another vertex in each edge in $E_{add}$ is called $V_{add}$, e.g., $E_{add}$ is $\{(u_k,v_1), (u_k,v_2)\}$, and $V_{add}$ is $\{v_1, v_2\}$.

Their first and third constraints are same, and we only need to prove that the second constraint can be deduced from each other combined with known conditions.

a) Now, we begin to prove that $s$ is one solution in $S'$, i.e.,  $s \in S$, and $s_i \in \bigcap_{u_n \in N_{k-1}(u_i)}{C^{u_n, v_n}_{u_i}} \Longrightarrow s_i \in \bigcap_{u_n \in N_k(u_i)}{C^{u_n, v_n}_{u_i}}$.

The above conclusion cannot be proven directly, and the third condition is needed to prove it.

By utilizing third condition, We can rewrite the second condition as 

\begin{equation}
s_i \in \bigcap_{u_n \in N_{k-1}(u_i)+V_{add}}{C^{u_n, v_n}_{u_i}}
\notag
\end{equation}

Because of following equation

\begin{align*}
    \bigcap_{u_n \in N_{k-1} + V_{add}}{C^{u_n, v_n}_{u_i}} &= \bigcap_{u_n \in N_{k-1} + V_{add} - V_{del}}{C^{u_n, v_n}_{u_i}}
\bigcap_{u_n \in E_{del}}{C^{u_n, v_n}_{u_i}}\\
&=\bigcap_{u_n \in N_{k-1} + V_{add} - V_{del}}{C^{u_n, v_n}_{u_i}}
\end{align*}

Then, 
\begin{equation}
s_i \in \bigcap_{u_n \in N_{k-1} + V_{add} - V_{del}}{C^{u_n, v_n}_{u_i}}=\bigcap_{u_n \in N_k(u_i)}{C^{u_n, v_n}_{u_i}}
\notag
\end{equation}

And this part has been proven.


b) Now, we begin to prove that $s'$ is one solution in $S'$, i.e., $s' \in S'$, and $s'_i \in \bigcap_{u_n \in N_k(u_i)}{C^{u_n, v_n}_{u_i}} \Longrightarrow s'_i  \in \bigcap_{u_n \in N_{k-1}(u_i)}{C^{u_n, v_n}_{u_i}}$. 

Because of following equation

\begin{align*}
    \bigcap_{u_n \in N_k}{C^{u_n, v_n}_{u_i}} &= \bigcap_{u_n \in N_{k-1} + V_{add} - V_{del}}{C^{u_n, v_n}_{u_i}}\\
&=\bigcap_{u_n \in N_{k-1} + V_{add}}{C^{u_n, v_n}_{u_i}}\\
&\subseteq \bigcap_{u_n \in N_{k-1}}{C^{u_n, v_n}_{u_i}}
\end{align*}

Hence, 

\begin{equation}
s'_i \in \bigcap_{u_n \in N_{k-1}}{C^{u_n, v_n}_{u_i}}
\notag
\end{equation}

And this part has been proven.

Thus, the solution of $eg_k$ is equal to $eg_{k-1}$'s solution. 

\end{proof}

Because of Dynamic Graph Editing Lemma, we can replace query graph $q$ with edited graph $eg_k$ in subsequent search. 

The DGE process can be designed in Algorithm \ref{algo:DGE}. Let $eg_0$ = $q$. The vertices of edited graph is always $V(q)$, so we don't need to edit the vertices in the graph. Lines 2-7 calculate the edges that can be deleted and update the VC. Lines 8-13 calculate the edges that must be added and update the VC. 

The core operation of DGE is to edit the edges in the graph. It will not have a pruning effect in classical set-intersection-based algorithm, but it will have an effect combined with the pruning algorithm like FS, as we will introduce in the following subsection.

\textbf{Example of DGE. } Here, we present an example of DGE process. We have a Query Graph in figure \ref{fig:DGE-example}(a), and figure \ref{fig:DGE-example}(c) is a part of a Data Graph. Figure \ref{fig:DGE-example}(b) is the edited graph after a branch search. Let the search order be $\{u_0,u_1,u_2,u_3,u_4\}$, the candidate set be $C(u_0)=\{v_0, v_5, ...\}$, 
$C(u_1)=\{v_1,v_2, ...\}$, 
$C(u_2)=\{v_1, v_2, ...\}$, 
$C(u_3)=\{v_3, v_6, ...\}$, 
$C(u_4)=\{v_4, ...\}$ and the conditional candidate set be 
$C^{u_0, v_0}_{u_1}=\{v_1,v_2\}$,
$C^{u_0, v_5}_{u_1}=\{v_1, v_2\}$,
$C^{u_0, v_0}_{u_2}=\{v_1,v_2\}$,
$C^{u_0, v_5}_{u_2}=\{v_1,v_2\}$,
$C^{u_1, v_1}_{u_3}=\{v_3\}$,
$C^{u_1, v_2}_{u_3}=\{v_3,v_6\}$, 
$C^{u_1, v_1}_{u_4}=\{v_4\}$, 
$C^{u_1, v_2}_{u_4}=\{v_4\}$, 
$C^{u_2, v_1}_{u_3}=\{v_3\}$,
$C^{u_2, v_2}_{u_3}=\{v_3,v_6\}$, 
$C^{u_2, v_1}_{u_4}=\{v_4\}$, 
$C^{u_2, v_2}_{u_4}=\{v_4\}$, 
$C^{u_3, v_3}_{u_4}=\{v_4\}$, 
$C^{u_3, v_6}_{u_4}=\{v_4\}$. Assume that these values are obtained by preprocessing.

The enumeration steps are shown in table \ref{tab:DGE-example}. 
At Level 0 of search tree, the VC of each vertex is the same as candidate set. And at Level 1 it is determined that $u_0$ mapping to $v_0$, and the VC corresponding to neighbor vertices $u_1$ and $u_2$ will change. At Level 2 $u_1$ is determined to map to $v_1$, and the VC corresponding to neighbor vertices $u_3$ and $u_4$ will change. At this time, since $v_1$ is in $VC(u_2)$, an edge $(u_1,u_2)$ will be added in the graph. At Level 3, $u_2$ is determined to map to $v_2$. 
Since $VC(u_3)$ and $VC(u_4)$ have not changed, that is, the mapping of $u_2$ don't affect the vertices $u_3$ and $u_4$, and the edges $(u_2,u_3)$ and $(u_2,u_4)$ can be safely removed. At Level 4, $u_3$ is determined to map to $v_3$. Since $VC(u_4)$ have not changed, that is, the mapping of $u_3$ don't affect the vertices $u_4$, and the edges $(u_3,u_4)$ can be safely removed. Finally, at last level, $u_4$ is determined to map to $v_4$. At this point, the remaining edited graph is shown in figure \ref{fig:DGE-example}(b). $e_1$ is added in it, and $\{e_2$, $e_3$, $e_4\}$ is deleted compared with query graph.

\begin{table}[h]
  \begin{center}
    \caption{DGE-example in enumeration}
    \label{tab:DGE-example}
    \begin{tabular}{p{0.10\linewidth}|p{0.10\linewidth}|p{0.10\linewidth}|p{0.10\linewidth}|p{0.10\linewidth}|p{0.10\linewidth}|p{0.10\linewidth}}
      \hline
      Level &0 & 1 & 2& 3 & 4 & 5\\
      \hline
      \hline
        & - &  $(u_0, v_0)$ & $(u_1, v_1)$ & $(u_2, v_2)$ & $(u_3, v_3)$ & $(u_4, v_4)$\\
      \hline 
      $VC(u_0)$ & $\{v_0, v_5$, $...\}$ & - & - &  - & - & - \\
      $VC(u_1)$ & $\{v_1, v_2$,  $...\}$ & $\{v_1, v_2\}$ & - & - & - & - \\
      $VC(u_2)$ & $\{v_1, v_2$,  $...\}$ & $\{v_1, v_2\}$ & $\{v_2\}$ & - & - & - \\
      $VC(u_3)$ & $\{v_3, v_6$,  $...\}$ & $\{v_3, v_6$,  $...\}$ & $\{v_3\}$ & $\{v_3\}$ & - & - \\
      $VC(u_4)$ & $\{v_4$,  $...\}$ & $\{v_4$,  $...\}$ & $\{v_4\}$ & $\{v_4\}$ & $\{v_4\}$ & - \\
      \hline
      \hline
      $E_{add}$ & - & $\emptyset$ & $\{e_1\}$ & $\emptyset$ & $\emptyset$& $\emptyset$\\
      $E_{del}$ & - & $\emptyset$ & $\emptyset$ & $\{e_2,e_3\}$ & $\{e_4\}$& $\emptyset$\\
      \hline 
     \end{tabular}
  \end{center}
\end{table}

Doing DGE will not bring more pruning in the classical set-intersection based algorithm, and there will be additional overhead when updating edges. It also needs to be combined with other optimization methods. In the next subsection, we will give a new algorithm called DGEE combined with Failing Set technique. 

\subsection{DGEE}
In this subsection, we will introduce the DGEE based on DGE combined with Failing Set \cite{han2019efficient} optimization.

Failing set is actually a set of related vertices that can change the current failing state. It defines a closure set of ancestors for $ancestors[u] = \bigcap_{u_n \in (N(u) \cap{\varphi})}{ancestors[u_n]}$. It is strongly related to the edges in the graph. 

DGEE's pseudo code is shown in algorithm \ref{algo:pipeline}. Line 1-4 indicate that the match can be found along this branch of the DFS, so the failing set is the complete set of vertices. Line 6 indicates that when enumerating new vertex's mapping, we are not sure whether there is a solution, so the FS is set to be empty. Lines 7-15 enumerate each candidate vertex and enter the next layer in Line 10. There are two important parts, the Update in Line 9 and the use of FS in Lines 11-14. 

\begin{algorithm}
\SetAlgoLined
\KwResult{The matching results}
 \If{$k=|V(Q)|$ } {
    Report it\;
    $FS[k-1] = V(Q)$\;
    \Return\;
 }
 $FS[k] = \emptyset$\;
 \For{$v \in {VC}_{u_k} $ } {
    $M' = M \cup \{(u, v)\}$\;
    Update(ancestors, $VC_{u_{k+1}}$, $u_{k+1}$)\;
    DGEE(Q, G, k+1)\;
    \If {$ u_k \notin FS[k] $} {
        $FS[k-1]=FS[k]$\;
        \Return\;
    }
 }
 $FS[k-1] = FS[k-1] \cup{FS[k]}$\;

 \caption{DGEE(Q, G, k)}
 \label{algo:pipeline}
\end{algorithm}

Lines 11-14, backtracking after performing a search, we will know whether there is a solution along this search branch, and the FS will also be obtained at this time. Then if current vertex $u_k$ is out of the set, the subsequent search will be wasteful and can backtrack directly.  And Line 16 is to inherit FS back. Another difference between this and FS is that the judgement of the visited vertex is cancelled, and this part has been added to the edited graph in the form of an edge. 

The Update's pseudo code is shown in algorithm \ref{algo:update}. It is a modification of the DGE algorithm. The main difference between it and algorithm \ref{algo:DGE} is that it updates $VC(u)$ at enumerating vertex $u$'s candidate. However, in algorithm \ref{algo:DGE}, all subsequent vertex's valid candidate set are updated at enumerating a vertex $u$. Experimental results show that the running time of the former is shorter. Although their time complexity is the same, the constant of algorithm \ref{algo:update} is smaller. 

Line 1 is the initialization of VC(u). Usually when doing $C_u^{u',v'} \cap VC(u)$ at first time, the size of $VC(u)$ will change. So in the implementation, this part will be directly equal to $C_u^{u',v'} \cap VC(u)$, where $u'$ is the first enumerated neighbor of $u$ ordered by DFS level. Line 2-7 is to initialize the edges and then delete the edges in $E_{del}$. In the optimization of FS, the role of the edge is reflected on ancestors, so only ancestors are needed to modified. Line 9-13 is to add edges in $E_{add}$. 

\begin{algorithm}
\SetAlgoLined
\KwResult{Update implicit edited graph and VC}
 $VC(u) = C(u)$\;
 $ancestors[u] = \emptyset$\;
 \For(){$u_n \in \varphi$} {
    $VC(u) = C_u^{u_n,v_n} \cap VC(u)$\;
    \If{size of $VC_u$ changed} {
        $ancestors[u] = ancestors[u_n] \cup ancestors[u]$\;
    }
 }
 \For(){$v \in VC(u)$} {
    \If{$v$ is mapped by $u'$} {
        $VC(u) =VC(u) - \{v\}$\;
        $ancestors[u] = ancestors[u'] \cup ancestors[u]$\;
    }
 }
 \caption{Update($VC(u)$)}
 \label{algo:update}
\end{algorithm}

In fact, the edited graph obtained by algorithm \ref{algo:DGE} and algorithm \ref{algo:update} may be different, since the order of adding edges process is different. But in practice, we have compared the two methods, and the pruning power is not much different. Since the $VC(u)$ is smaller after all intersection operations completed, the algorithm \ref{algo:update} will take less time. 

\section{Experiment Results}
In this section, we will evaluate the performance of two proposed algorithms, fGQL and DGEE. In the experiment, we compare the overall performance and the performance of each part.
Firstly, we compare the combined two algorithms with the state of the art algorithm, the \textbf{baseline} \cite{sun2020memory}. Secondly, we compare the fGQL with GQL in preprocessing time. Finally, we compare the DGEE with LFTJ, the state of the art enumeration method recommended in paper \cite{sun2020memory}.


\subsection{Experimental setup}
\textbf{Baseline. } It \cite{sun2020memory} points that a faster algorithm can be achieved through a combination of different optimizations of the current algorithms by analyze eight algorithms (QuickSI, GraphQL, CFL, CECI, DAF, RI, VF2++, Glasgow). And we use the settings it recommended, which can achieve best performance: 1) GraphQL as Preprocessing method. 2) RI as reordering method. 3) CECI as enumeration method. and 4) optimization technique failing set \cite{han2019efficient} and QFilter \cite{han2018speeding}. The source code refers to the code published in the paper.

\textbf{Datasets.} Our algorithm is tested by 8 real-world data graphs used in paper \cite{sun2020memory} shown in Table \ref{tab:datasets}. The public datasets contains different graph scales, number of labels. And the query graphs are the same as used in paper \cite{sun2020memory}, which is generated by randomly extracting subgraphs from data graph. It have 1800 query graphs for each data graph. These 1800 query graphs can be divided into 9 different scales graphs. For query graphs that can be solved in a limited time, we call them easy graphs. And for query graphs that cannot be solved in a limited time, we call them hard graphs.

\begin{table}[h]
  \begin{center}
    \caption{Eight data graph features}
    \label{tab:datasets}
    \begin{tabular}{l|c|c|c}
      \hline
      Dataset &  $|V|$ & $|E|$ & $|\Sigma|$\\
      \hline
      hprd & 9,460 & 34,998 & 307\\
      yeast & 3,112 & 12,519 & 71 \\
      dblp & 317,080 & 1,049,866 & 15 \\
      eu2005 & 862,664 & 16,138,468 & 40 \\
      youtube & 1,134,890 & 2,987,624 & 25 \\
      patents & 3,774,768 & 16,518,947 & 20 \\
      human & 9,460 &34,998 & 44 \\
      wordnet & 76,853 & 120,399 & 5 \\
    \end{tabular}
  \end{center}
\end{table}

\textbf{Experiment Environment.} We evaluate code on Intel Xeon Gold 6129 CPU. It have 64 cores and 1024GB RAM running CentOS Linux.

\textbf{Runtime Setup.} Since the number of solutions in some query graphs are exponentially explode, the code is terminated after finding 100,000 matches. Since the algorithms we compared use the same code when reading the graph, we do not count the reading time when counting elapsed time.
There are still some hard query graphs during the test, and the running time for algorithms to run to the solution is hard to estimate, so we can't let the program run all the time.
We set maximum running time limit to 5 minutes for each query graphs. 

\textbf{Metrics.} We use two metrics to evaluate the performance: unsolved number and elapsed time. Running time is an important indicator to measure the performance of an algorithm. 
Unsolved number indicates the number of unsolved query graphs within all query graphs in a data graph. It is another importance evaluation criterion, which indicated the ability to solve difficult cases. 
Additionally, our results are given in terms of average values. 
For those query graphs that are forcibly terminated during operation, we set its elapsed time to 5 minutes, when calculating average elapsed time.

\subsection{Overall Performance}
In this subsection, we use the combination of fGQL and DGEE to compare the overall performance with $baseline$. And in general, the new algorithm surpasses the $baseline$ in all aspects. 

In table \ref{table:1}, for four datasets that can be completely solved, the new algorithm can also solve them completely. Moreover, on the eu2005 dataset, the new algorithm can clear it to solve the query graph that could not be solved in $baseline$. Furthermore, we can reduce the number of hard query graphs for three datasets that previously existed hard query graphs. 

\begin{table}[h!]
  \begin{center}
    \caption{Number of unsolved query graphs on real datasets}
    \label{table:1}
    \begin{tabular}{c|c|c}
      \hline
      \textbf{Datasets} &  \textbf{Baseline} &  \textbf{Our Algorithm}  \\
      \hline
      hprd & 0 &  \textbf{0} \\
      yeast & 0 & \textbf{0} \\
      dblp & 18 & \textbf{4} \\
      eu2005 & 1 & \textbf{0} \\
      youtube & 0 & \textbf{0} \\
      patents & 8 & \textbf{5} \\
      human & 8 &  \textbf{2} \\
      wordnet & 0 & \textbf{0} \\
    \end{tabular}
  \end{center}
\end{table}

From the perspective of the elapsed time, our algorithm also outperforms the $baseline$ on eight datasets. Figure \ref{fig:overview} shows an overview of elapsed time. Regardless of the two datasets that are easier to solve, hprd and yeast, there are significant improvements on the other six datasets. 

\begin{figure}[h]
  \centering
  \includegraphics[width=\linewidth]{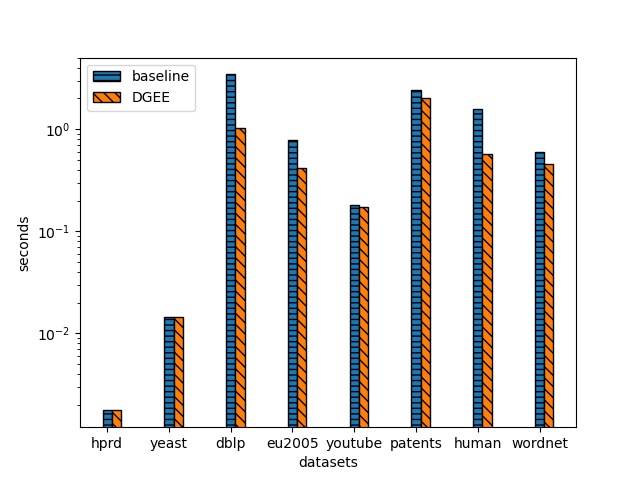}
  \caption{Elapsed time of subgraph matching algorithms on real datasets}
  \Description{Performance on 8 datasets}
  \label{fig:overview}
\end{figure}

For scenarios where the longer the enumeration time is, our algorithm has a stronger power to reduce the execution time. On dblp, the time can be reduced by more than 70\%. For datasets that are not significantly improved like youtube, it is because search space before optimization is small and can be completed soon without pruning. 

Figure \ref{fig:details} is more detailed data to illustrate performance. It is the result of the elapsed time of each scale of graph in each dataset. It can be seen from the figure that our performance is better than $baseline$ on almost all query graphs. Although some easy query graphs will be slower. When the scale of the graph is large, such as 32S and 32D, there is a significant improvement in each dataset. 

And our algorithm works better than $baseline$ in almost graph type, including sparse graph, dense graph, large graph and small graph. And as the scale of the query graph becomes larger, the pruning effect continues to increase, and the performance will be better.  

\begin{figure}[h]
  \centering
  \includegraphics[width=\linewidth]{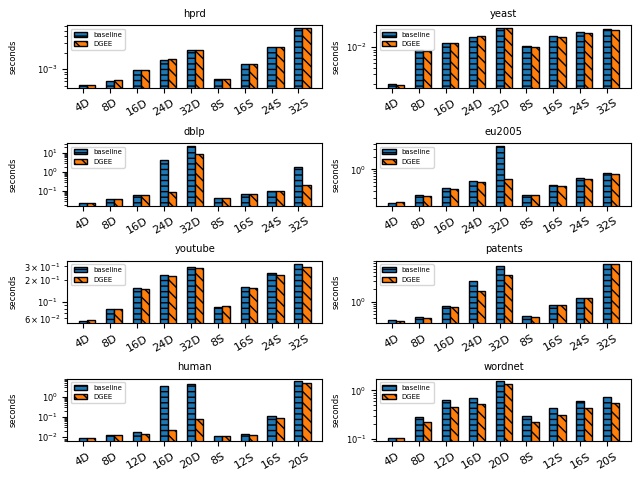}
  \caption{Elapsed time of subgraph matching algorithms on different scale query graphs}
  \Description{Performance on 8 datasets, 9 different graph type}
  \label{fig:details}
\end{figure}

\subsection{Static Pruning performance}

In this subsection, we compare fGQL with CFL, GQL, DAF and CECI in pruning ability. All settings like iteration number are corresponding to original paper's setting. 

The number of average candidate vertices is calculated as following:

\begin{equation}
     avg_{candidate} = \sum_{k=1,..,N} {(\sum_{i=0,...,|V(q)|-1} C(u_i) / |V(q)|)}/N
\end{equation}

Where $N$ is number of query graphs for a data graph.

In table \ref{tab:filter-num}, compared with other algorithms, fGQL has the least number of candidate vertices in each dataset. 
The ratio is the result of fGQL compared with the least number of average candidate vertices among the other algorithms.
The pruning ability of fGQL is significantly improved compared with other algorithms. In comparison with CFL, it can be reduced by more than 40\% in dblp. 

\begin{table}[h!]
  \begin{center}
    \caption{Number of average candidate vertices of five static pruning methods}
    \label{tab:filter-num}
    \begin{tabular}{c|c|c|c|c|c}
      \hline
      &\textbf{CECI} & \textbf{CFL} & \textbf{DAF} &  \textbf{GQL} & \textbf{fGQL}  \\
      \hline
      hp & 4.206 &     3.606&     3.727 & 3.169 & \textbf{3.169}\\
      ye & 48.035&    43.628&    49.176 & 33.868 &    \textbf{33.868} \\
      db & 425.709&   327.834&   272.205 & 188.334 &   \textbf{188.334}\\
      eu & 2179.185&  1985.002&  1816.099 & 1565.373 &  \textbf{1565.373}\\
      yt & 923.2&     837.48&    790.449 & 693.815 &   \textbf{693.815}\\
      up & 3171.876&  2524.463&  1940.351 & 1452.86 &  \textbf{1452.86}\\
      hu & 83.992&    83.378&    88.229 & 79.651 &    \textbf{79.651} \\
      wn & 25301.364& 25270.053& 25624.669 & 19917.665 & \textbf{19917.665}\\
    \end{tabular}
  \end{center}
\end{table}

Due to fewer candidate number, the search space can be reduced, so that more query graphs can be solved in a limited time as in table \ref{tab:unsolved-filter}. Experimental results show that after using fGQL, it can solve more hard query graphs with least number of unsolved query graphs among these algorithms. The outcome of GQL is same to fGQL, so they have 

\begin{table}[h!]
  \begin{center}
    \caption{Number of unsolved query graphs of five static pruning methods}
    \label{tab:unsolved-filter}
    \begin{tabular}{c|c|c|c|c|c}
      \hline
      \textbf{Datasets} & \textbf{CECI} & \textbf{CFL} & \textbf{DAF} & \textbf{GQL} & \textbf{fGQL}  \\
      \hline
      hprd & 0 & 0& 0& 0 &   \textbf{0} \\
      yeast & 0 & 0& 0& 0 &    \textbf{0} \\
      dblp & 21&   21& 23 & 19 &   \textbf{19} \\
      eu2005 & 7 &  6& 10 & 1 &  \textbf{1} \\
      youtube & 0 & 0& 0& 0 &    \textbf{0}\\
      patents & 8 &  8& 8 &8 &  \textbf{8} \\
      human & 12 &    10& 16 & 8 &   \textbf{8} \\
      wordnet & 0 & 0 & 0& 0 & \textbf{0}\\
    \end{tabular}
  \end{center}
\end{table}

Compared with GQL, fGQL has less preprocessing time in figure \ref{fig:filter-time}, such as human and wordnet. 

In addition, in some cases with less redundant operations, such as hprd and dblp, the allocation of extra memory in fGQL will bring extra time overhead, which will increase the time slightly. 


\begin{figure}[h]
  \centering
  \includegraphics[width=\linewidth]{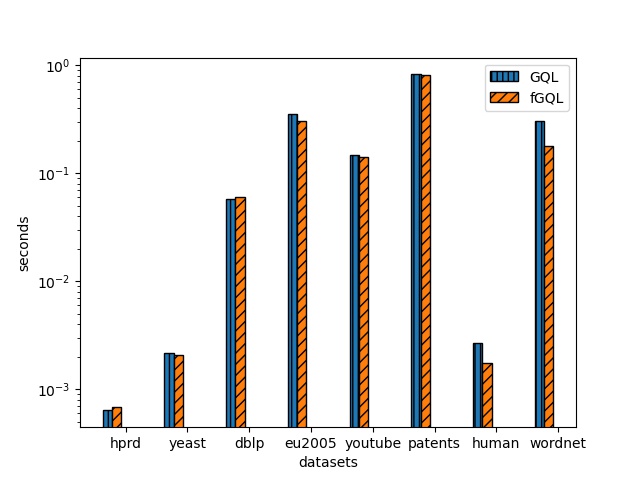}
  \caption{Filter time of GQL and fGQL in eight data graphs}
  \Description{preprocessing time}
  \label{fig:filter-time}
\end{figure}

To sum up, fGQL has strongest pruning ability. And compared with the execution time constraint of 5 minutes, its filter time is very short, less than 1 second, which is 0.33\% of the execution time. 

\subsection{Dynamic Pruning performance}
In this subsection, we analyze the performance of the algorithms after utilizing DGEE. In the above analysis, we get CFL or fGQL is the best choice with the least preprocessing time and the best pruning effect. Therefore, here we compare the combination of DGE with two static filter algorithms.

The size of the search space is a count of the number of nodes in the search tree. The number of it can measure the pruning ability of the algorithm. After using DGEE, the search space can be reduced as shown in table \ref{tab:SpaceSize}. In particular, the larger the search space before optimization, the stronger the dynamic pruning effect in figure \ref{fig:relation}. In the case of using CFL's static pruning method on the eu2005 dataset, the DGEE method can reduce the search space by more than 95\%. 

Since fGQL gets fewer candidate vertices than CFL, the search space of fGQL is smaller than CFL. For the hprd dataset, the effect in fGQL will be worse than that in CFL. The reason is that the set of candidate vertices will affect the deleted edges.

Then it is more interesting that after using DGEE, the performance gap between the different static pruning methods can be narrowed. Some algorithms like CFL with less pruning power but low time cost have stronger power. And it can be seen in figure \ref{fig:relation} that DGEE has a better effect on using CFL as a static pruning method. 100,000 is a dividing line, since we limit the search process until 100,000 solutions are found. Sometimes the solutions will be less than 100,000. 

Moreover, for datasets whose search space is less than 100,000 before optimization, such hprd,yeast and youtube, the solution can be found directly, i.e, there are few invalid branches in searching. Therefore, pruning has little effect in these datasets. And the optimization effect of DGEE is mainly reflected in cases where the search space is greater than 100,000.

\begin{table}[h]
  \begin{center}
    \caption{Size of search space of DGEE/LFTJ with fGQL and CFL static pruning methods}
    \label{tab:SpaceSize}
    \begin{tabular}{c|c|c|c|c}
      \hline
      \textbf{Datasets} &\textbf{fGQL} & \textbf{fGQL+DGEE} & \textbf{CFL} & \textbf{CFL+DGEE} \\
      \hline
hprd &3,012 & 3,012 & 3,015 & \textbf{3,009}  \\
yeast &50,656 & \textbf{50,157} & 56,971 & 54,313  \\
dblp &17,735,657 &  \textbf{2,343,772} & 20,673,852  & 2,348,706  \\
eu2005 &1,141,857 & \textbf{72,474} & 1,789,456 & 78,555  \\
youtube &25,033 & \textbf{24,772} & 25,957 & 25,663  \\
patents &4,457,248 & \textbf{2,260,981} & 4,647,940 & 2,270,049  \\
human &5,811,305 & \textbf{1,580,398} & 6,682,605 & 1,581,332  \\
wordnet &768,023 & \textbf{655,045} & 934,823 & 776,166 \\
    \end{tabular}
  \end{center}
\end{table}

\begin{figure}[h]
  \centering
  \includegraphics[width=\linewidth]{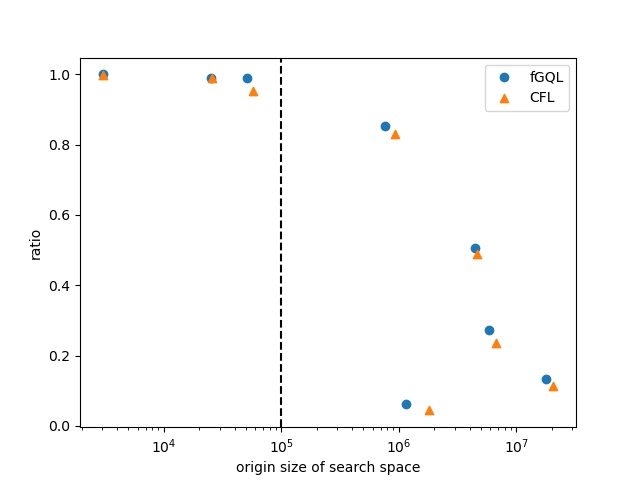}
  \caption{The compression ratio after using DGEE with two static filters}
  \Description{relation}
  \label{fig:relation}
\end{figure}

We compare DGEE with LFTJ. LFTJ is a enumeration method recommended in Sun's \cite{sun2020memory} work, which is the set-intersection based local candidates computation. It takes CECI's enumeration combining QFilter and Failing Set. 

 Figure \ref{fig:DGEforAlgo} shows the performance of DGEE under different static pruning. While reducing the size of the search space, DGEE does not introduce additional enumeration overhead. The results show that whether it is CFL or fGQL, after the introduction of DGEE, the performance on many datasets has been greatly improved, and there is no performance deterioration on all dataset. Moreover, without using DGEE, the performance of CFL anf fGQL are different, and they are applicable to different datasets. But after using DGEE, the running time of CFL on most datasets is lower than fGQL, expect that it is not as good as fGQL on human dataset. The pruning effect of DGEE has little relevance to the static pruning method. That is to say, the content of dynamic pruning and the content of static pruning do not have much overlap. 

\begin{figure}[h]
  \centering
  \includegraphics[width=\linewidth]{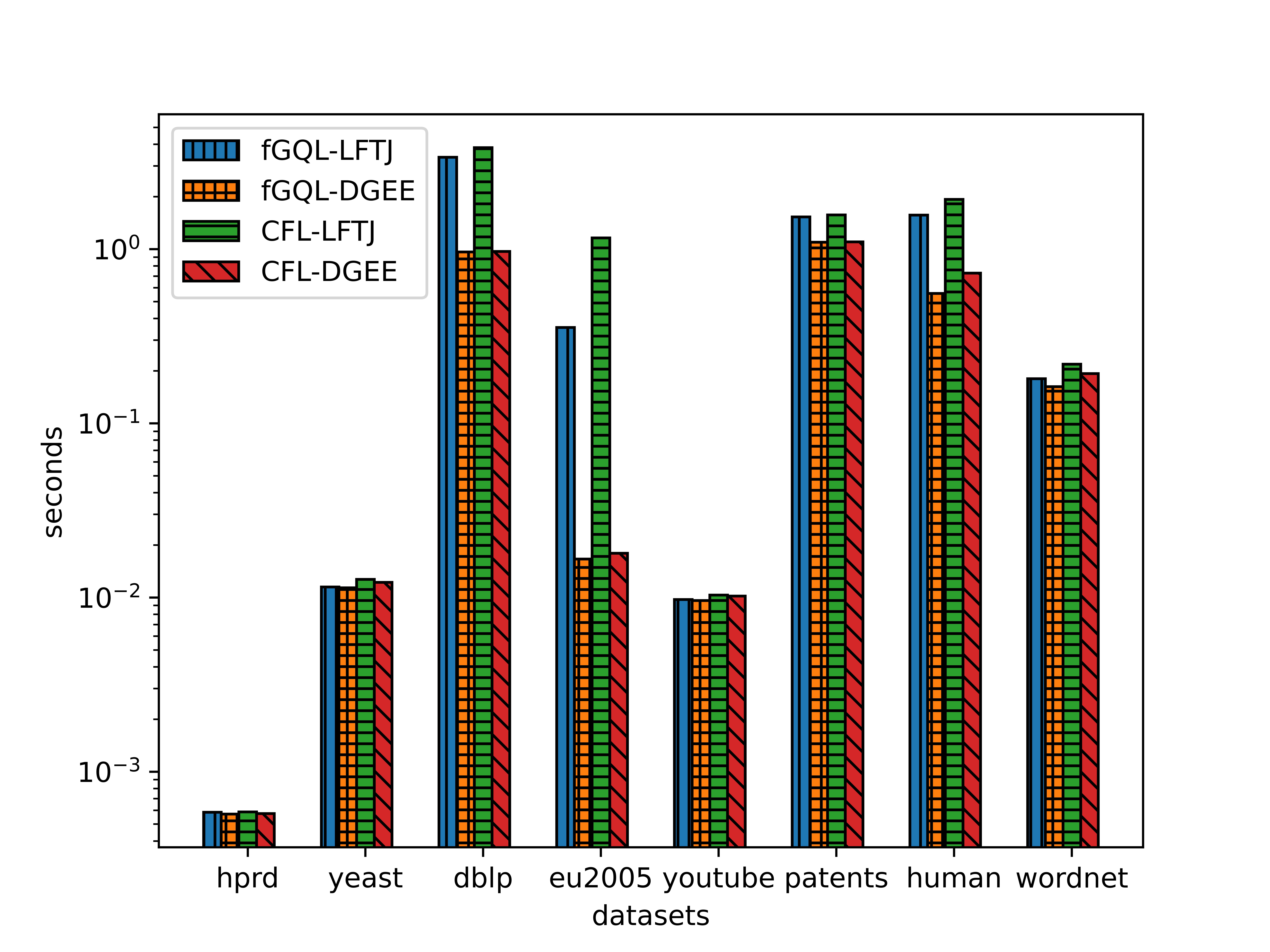}
  \caption{Enumeration time of DGEE/LFTJ with two static pruning methods (CFL and fGQL) on eight data graphs}
  \Description{DGEE time}
  \label{fig:DGEforAlgo}
\end{figure}

Figure \ref{fig:DGEforAlgoDetails} shows that most of the performance improvement is on large query graphs. And it can eliminate the gap caused by static pruning, like 16D in dblp. Additionally, it can be seen that DGEE can improve or maintain the performance regardless of the scale of the query graph. Since more edges can be deleted in the dense query graphs, most of the performance improvement happens in dense query graphs. By the way, there are also a small amount of large sparse query graphs that can be accelerated.

\begin{figure}[h]
  \centering
  \includegraphics[width=\linewidth]{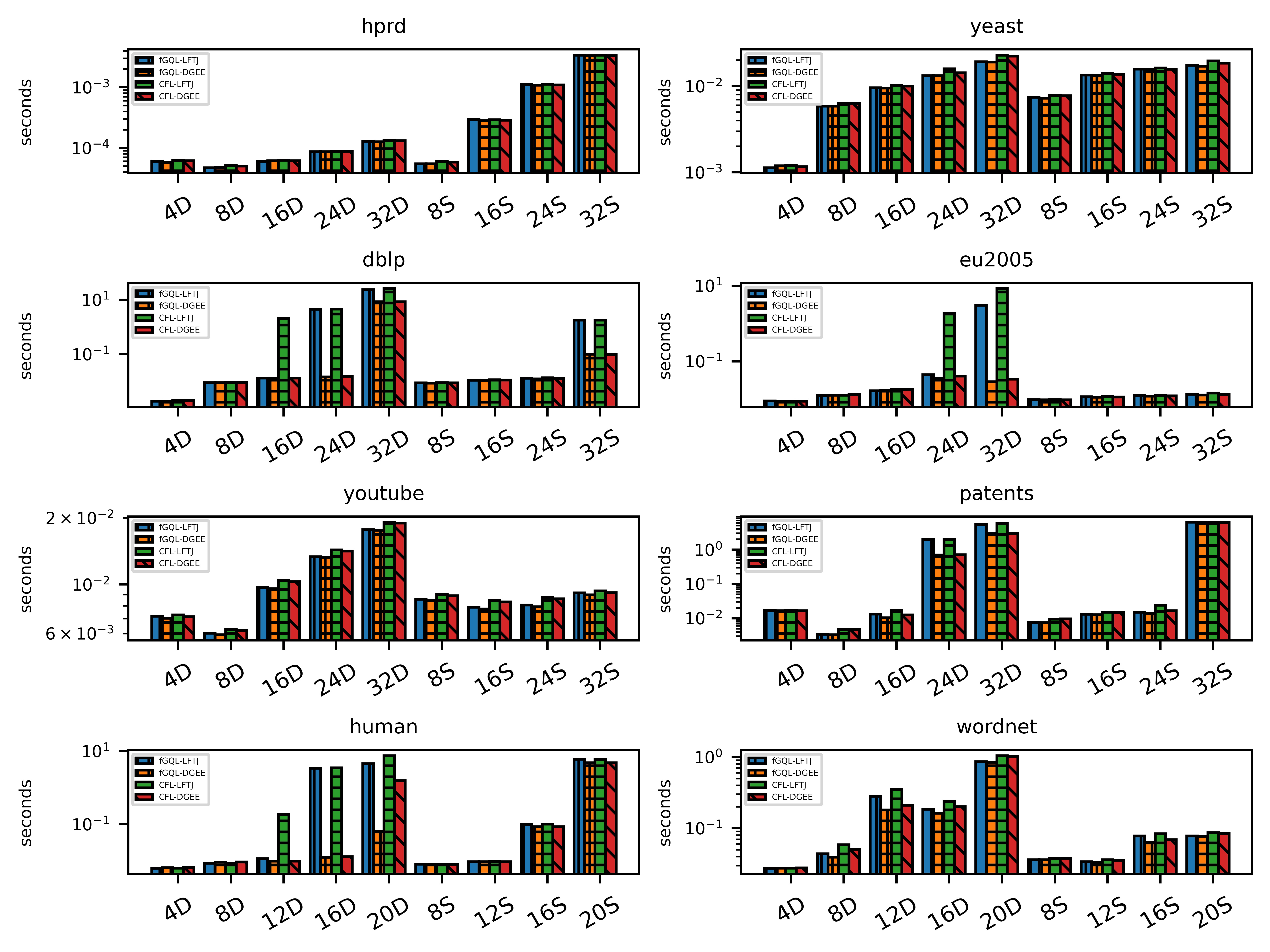}
  \caption{Enumeration time on different scale query graphs}
  \Description{Enumeration time}
  \label{fig:DGEforAlgoDetails}
\end{figure}

To sum up, DGEE can empower existing algorithms. Besides, DGEE does not require a strong static pruning algorithm to achieve high performance and have strong stability in different static pruning methods. Additionally, it is also more suitable for dense query graphs, with several orders of magnitude improvement on some dense query graphs. 

\section{Conclusion}
In this paper, we consider locality concepts in iterative static pruning method, and propose a new static pruning algorithm called fGQL. Compared with the GQL, it can accelerate the preproccessing.
And we propose a Dynamic Graph Editing technique in Assignment under Conditional Candidate Set Problem and develop a new enumeration method, called Dynamic Graph Editing Enumeration. 
Through the comparative analysis of DGEE with different static pruning methods, the pruning effect of DGE has strong stability. 
Extensive experiments on real datasets show that DGEE outperforms the state-of-the-art algorithm both in easy graphs and hard graphs, and can be improved by several orders of magnitude on some cases. 

In addition, applying Dynamic Graph Editing with other algorithms like Versatile Equivalences would be an interesting future work.


\bibliographystyle{ACM-Reference-Format}
\bibliography{sample-base}


\begin{thebibliography}{27}


\ifx \showCODEN    \undefined \def \showCODEN     #1{\unskip}     \fi
\ifx \showDOI      \undefined \def \showDOI       #1{#1}\fi
\ifx \showISBNx    \undefined \def \showISBNx     #1{\unskip}     \fi
\ifx \showISBNxiii \undefined \def \showISBNxiii  #1{\unskip}     \fi
\ifx \showISSN     \undefined \def \showISSN      #1{\unskip}     \fi
\ifx \showLCCN     \undefined \def \showLCCN      #1{\unskip}     \fi
\ifx \shownote     \undefined \def \shownote      #1{#1}          \fi
\ifx \showarticletitle \undefined \def \showarticletitle #1{#1}   \fi
\ifx \showURL      \undefined \def \showURL       {\relax}        \fi
\providecommand\bibfield[2]{#2}
\providecommand\bibinfo[2]{#2}
\providecommand\natexlab[1]{#1}
\providecommand\showeprint[2][]{arXiv:#2}

\bibitem[\protect\citeauthoryear{Archibald, Dunlop, Hoffmann, McCreesh,
  Prosser, and Trimble}{Archibald et~al\mbox{.}}{2019}]%
        {archibald2019sequential}
\bibfield{author}{\bibinfo{person}{Blair Archibald}, \bibinfo{person}{Fraser
  Dunlop}, \bibinfo{person}{Ruth Hoffmann}, \bibinfo{person}{Ciaran McCreesh},
  \bibinfo{person}{Patrick Prosser}, {and} \bibinfo{person}{James Trimble}.}
  \bibinfo{year}{2019}\natexlab{}.
\newblock \showarticletitle{Sequential and parallel solution-biased search for
  subgraph algorithms}. In \bibinfo{booktitle}{\emph{International Conference
  on Integration of Constraint Programming, Artificial Intelligence, and
  Operations Research}}. Springer, \bibinfo{pages}{20--38}.
\newblock


\bibitem[\protect\citeauthoryear{Bhattarai, Liu, and Huang}{Bhattarai
  et~al\mbox{.}}{2019}]%
        {bhattarai2019ceci}
\bibfield{author}{\bibinfo{person}{Bibek Bhattarai}, \bibinfo{person}{Hang
  Liu}, {and} \bibinfo{person}{H~Howie Huang}.}
  \bibinfo{year}{2019}\natexlab{}.
\newblock \showarticletitle{Ceci: Compact embedding cluster index for scalable
  subgraph matching}. In \bibinfo{booktitle}{\emph{Proceedings of the 2019
  International Conference on Management of Data}}.
  \bibinfo{pages}{1447--1462}.
\newblock


\bibitem[\protect\citeauthoryear{Bi, Chang, Lin, Qin, and Zhang}{Bi
  et~al\mbox{.}}{2016}]%
        {bi2016efficient}
\bibfield{author}{\bibinfo{person}{Fei Bi}, \bibinfo{person}{Lijun Chang},
  \bibinfo{person}{Xuemin Lin}, \bibinfo{person}{Lu Qin}, {and}
  \bibinfo{person}{Wenjie Zhang}.} \bibinfo{year}{2016}\natexlab{}.
\newblock \showarticletitle{Efficient subgraph matching by postponing cartesian
  products}. In \bibinfo{booktitle}{\emph{Proceedings of the 2016 International
  Conference on Management of Data}}. \bibinfo{pages}{1199--1214}.
\newblock


\bibitem[\protect\citeauthoryear{Blanu{\v{s}}a, Stoica, Ienne, and
  Atasu}{Blanu{\v{s}}a et~al\mbox{.}}{2020}]%
        {blanuvsa2020manycore}
\bibfield{author}{\bibinfo{person}{Jovan Blanu{\v{s}}a}, \bibinfo{person}{Radu
  Stoica}, \bibinfo{person}{Paolo Ienne}, {and} \bibinfo{person}{Kubilay
  Atasu}.} \bibinfo{year}{2020}\natexlab{}.
\newblock \showarticletitle{Manycore clique enumeration with fast set
  intersections}.
\newblock \bibinfo{journal}{\emph{Proceedings of the VLDB Endowment}}
  \bibinfo{volume}{13}, \bibinfo{number}{12} (\bibinfo{year}{2020}),
  \bibinfo{pages}{2676--2690}.
\newblock


\bibitem[\protect\citeauthoryear{Bunke and Allermann}{Bunke and
  Allermann}{1983}]%
        {bunke1983inexact}
\bibfield{author}{\bibinfo{person}{Horst Bunke} {and} \bibinfo{person}{Gudrun
  Allermann}.} \bibinfo{year}{1983}\natexlab{}.
\newblock \showarticletitle{Inexact graph matching for structural pattern
  recognition}.
\newblock \bibinfo{journal}{\emph{Pattern Recognition Letters}}
  \bibinfo{volume}{1}, \bibinfo{number}{4} (\bibinfo{year}{1983}),
  \bibinfo{pages}{245--253}.
\newblock


\bibitem[\protect\citeauthoryear{Carletti, Foggia, Saggese, and Vento}{Carletti
  et~al\mbox{.}}{2017}]%
        {carletti2017introducing}
\bibfield{author}{\bibinfo{person}{Vincenzo Carletti},
  \bibinfo{person}{Pasquale Foggia}, \bibinfo{person}{Alessia Saggese}, {and}
  \bibinfo{person}{Mario Vento}.} \bibinfo{year}{2017}\natexlab{}.
\newblock \showarticletitle{Introducing VF3: A new algorithm for subgraph
  isomorphism}. In \bibinfo{booktitle}{\emph{International Workshop on
  Graph-Based Representations in Pattern Recognition}}. Springer,
  \bibinfo{pages}{128--139}.
\newblock


\bibitem[\protect\citeauthoryear{Cook and Holder}{Cook and Holder}{1993}]%
        {cook1993substructure}
\bibfield{author}{\bibinfo{person}{Diane~J Cook} {and}
  \bibinfo{person}{Lawrence~B Holder}.} \bibinfo{year}{1993}\natexlab{}.
\newblock \showarticletitle{Substructure discovery using minimum description
  length and background knowledge}.
\newblock \bibinfo{journal}{\emph{Journal of Artificial Intelligence Research}}
   \bibinfo{volume}{1} (\bibinfo{year}{1993}), \bibinfo{pages}{231--255}.
\newblock


\bibitem[\protect\citeauthoryear{Cordella, Foggia, Sansone, and Vento}{Cordella
  et~al\mbox{.}}{2004}]%
        {cordella2004sub}
\bibfield{author}{\bibinfo{person}{Luigi~P Cordella}, \bibinfo{person}{Pasquale
  Foggia}, \bibinfo{person}{Carlo Sansone}, {and} \bibinfo{person}{Mario
  Vento}.} \bibinfo{year}{2004}\natexlab{}.
\newblock \showarticletitle{A (sub) graph isomorphism algorithm for matching
  large graphs}.
\newblock \bibinfo{journal}{\emph{IEEE transactions on pattern analysis and
  machine intelligence}} \bibinfo{volume}{26}, \bibinfo{number}{10}
  (\bibinfo{year}{2004}), \bibinfo{pages}{1367--1372}.
\newblock


\bibitem[\protect\citeauthoryear{Fan}{Fan}{2012}]%
        {fan2012graph}
\bibfield{author}{\bibinfo{person}{Wenfei Fan}.}
  \bibinfo{year}{2012}\natexlab{}.
\newblock \showarticletitle{Graph pattern matching revised for social network
  analysis}. In \bibinfo{booktitle}{\emph{Proceedings of the 15th International
  Conference on Database Theory}}. \bibinfo{pages}{8--21}.
\newblock


\bibitem[\protect\citeauthoryear{Han, Kim, Gu, Park, and Han}{Han
  et~al\mbox{.}}{2019}]%
        {han2019efficient}
\bibfield{author}{\bibinfo{person}{Myoungji Han}, \bibinfo{person}{Hyunjoon
  Kim}, \bibinfo{person}{Geonmo Gu}, \bibinfo{person}{Kunsoo Park}, {and}
  \bibinfo{person}{Wook-Shin Han}.} \bibinfo{year}{2019}\natexlab{}.
\newblock \showarticletitle{Efficient subgraph matching: Harmonizing dynamic
  programming, adaptive matching order, and failing set together}. In
  \bibinfo{booktitle}{\emph{Proceedings of the 2019 International Conference on
  Management of Data}}. \bibinfo{pages}{1429--1446}.
\newblock


\bibitem[\protect\citeauthoryear{Han, Zou, and Yu}{Han et~al\mbox{.}}{2018}]%
        {han2018speeding}
\bibfield{author}{\bibinfo{person}{Shuo Han}, \bibinfo{person}{Lei Zou}, {and}
  \bibinfo{person}{Jeffrey~Xu Yu}.} \bibinfo{year}{2018}\natexlab{}.
\newblock \showarticletitle{Speeding up set intersections in graph algorithms
  using simd instructions}. In \bibinfo{booktitle}{\emph{Proceedings of the
  2018 International Conference on Management of Data}}.
  \bibinfo{pages}{1587--1602}.
\newblock


\bibitem[\protect\citeauthoryear{Han, Lee, and Lee}{Han et~al\mbox{.}}{2013}]%
        {han2013turboiso}
\bibfield{author}{\bibinfo{person}{Wook-Shin Han}, \bibinfo{person}{Jinsoo
  Lee}, {and} \bibinfo{person}{Jeong-Hoon Lee}.}
  \bibinfo{year}{2013}\natexlab{}.
\newblock \showarticletitle{Turboiso: towards ultrafast and robust subgraph
  isomorphism search in large graph databases}. In
  \bibinfo{booktitle}{\emph{Proceedings of the 2013 ACM SIGMOD International
  Conference on Management of Data}}. \bibinfo{pages}{337--348}.
\newblock


\bibitem[\protect\citeauthoryear{Hartmanis}{Hartmanis}{1982}]%
        {hartmanis1982computers}
\bibfield{author}{\bibinfo{person}{Juris Hartmanis}.}
  \bibinfo{year}{1982}\natexlab{}.
\newblock \showarticletitle{Computers and intractability: a guide to the theory
  of np-completeness (michael r. garey and david s. johnson)}.
\newblock \bibinfo{journal}{\emph{Siam Review}} \bibinfo{volume}{24},
  \bibinfo{number}{1} (\bibinfo{year}{1982}), \bibinfo{pages}{90}.
\newblock


\bibitem[\protect\citeauthoryear{He and Singh}{He and Singh}{2008}]%
        {he2008graphs}
\bibfield{author}{\bibinfo{person}{Huahai He} {and} \bibinfo{person}{Ambuj~K
  Singh}.} \bibinfo{year}{2008}\natexlab{}.
\newblock \showarticletitle{Graphs-at-a-time: query language and access methods
  for graph databases}. In \bibinfo{booktitle}{\emph{Proceedings of the 2008
  ACM SIGMOD international conference on Management of data}}.
  \bibinfo{pages}{405--418}.
\newblock


\bibitem[\protect\citeauthoryear{Hong, Zou, Lian, and Philip}{Hong
  et~al\mbox{.}}{2015}]%
        {hong2015subgraph}
\bibfield{author}{\bibinfo{person}{Liang Hong}, \bibinfo{person}{Lei Zou},
  \bibinfo{person}{Xiang Lian}, {and} \bibinfo{person}{S~Yu Philip}.}
  \bibinfo{year}{2015}\natexlab{}.
\newblock \showarticletitle{Subgraph matching with set similarity in a large
  graph database}.
\newblock \bibinfo{journal}{\emph{Ieee transactions on knowledge and data
  engineering}} \bibinfo{volume}{27}, \bibinfo{number}{9}
  (\bibinfo{year}{2015}), \bibinfo{pages}{2507--2521}.
\newblock


\bibitem[\protect\citeauthoryear{Kim, Choi, Park, Lin, Hong, and Han}{Kim
  et~al\mbox{.}}{2021}]%
        {kim2021versatile}
\bibfield{author}{\bibinfo{person}{Hyunjoon Kim}, \bibinfo{person}{Yunyoung
  Choi}, \bibinfo{person}{Kunsoo Park}, \bibinfo{person}{Xuemin Lin},
  \bibinfo{person}{Seok-Hee Hong}, {and} \bibinfo{person}{Wook-Shin Han}.}
  \bibinfo{year}{2021}\natexlab{}.
\newblock \showarticletitle{Versatile Equivalences: Speeding up Subgraph Query
  Processing and Subgraph Matching}. In \bibinfo{booktitle}{\emph{Proceedings
  of the 2021 International Conference on Management of Data}}.
  \bibinfo{pages}{925--937}.
\newblock


\bibitem[\protect\citeauthoryear{Kim, Seo, Han, Lee, Hong, Chafi, Shin, and
  Jeong}{Kim et~al\mbox{.}}{2018}]%
        {kim2018turboflux}
\bibfield{author}{\bibinfo{person}{Kyoungmin Kim}, \bibinfo{person}{In Seo},
  \bibinfo{person}{Wook-Shin Han}, \bibinfo{person}{Jeong-Hoon Lee},
  \bibinfo{person}{Sungpack Hong}, \bibinfo{person}{Hassan Chafi},
  \bibinfo{person}{Hyungyu Shin}, {and} \bibinfo{person}{Geonhwa Jeong}.}
  \bibinfo{year}{2018}\natexlab{}.
\newblock \showarticletitle{Turboflux: A fast continuous subgraph matching
  system for streaming graph data}. In \bibinfo{booktitle}{\emph{Proceedings of
  the 2018 International Conference on Management of Data}}.
  \bibinfo{pages}{411--426}.
\newblock


\bibitem[\protect\citeauthoryear{Li, Song, Dai, Sidelnik, Hari, and Zhou}{Li
  et~al\mbox{.}}{2015}]%
        {li2015locality}
\bibfield{author}{\bibinfo{person}{Chao Li}, \bibinfo{person}{Shuaiwen~Leon
  Song}, \bibinfo{person}{Hongwen Dai}, \bibinfo{person}{Albert Sidelnik},
  \bibinfo{person}{Siva Kumar~Sastry Hari}, {and} \bibinfo{person}{Huiyang
  Zhou}.} \bibinfo{year}{2015}\natexlab{}.
\newblock \showarticletitle{Locality-driven dynamic GPU cache bypassing}. In
  \bibinfo{booktitle}{\emph{Proceedings of the 29th ACM on International
  Conference on Supercomputing}}. \bibinfo{pages}{67--77}.
\newblock


\bibitem[\protect\citeauthoryear{Liu, Pan, He, Song, Jiang, and Shang}{Liu
  et~al\mbox{.}}{2020}]%
        {liu2020neural}
\bibfield{author}{\bibinfo{person}{Xin Liu}, \bibinfo{person}{Haojie Pan},
  \bibinfo{person}{Mutian He}, \bibinfo{person}{Yangqiu Song},
  \bibinfo{person}{Xin Jiang}, {and} \bibinfo{person}{Lifeng Shang}.}
  \bibinfo{year}{2020}\natexlab{}.
\newblock \showarticletitle{Neural subgraph isomorphism counting}. In
  \bibinfo{booktitle}{\emph{Proceedings of the 26th ACM SIGKDD International
  Conference on Knowledge Discovery \& Data Mining}}.
  \bibinfo{pages}{1959--1969}.
\newblock


\bibitem[\protect\citeauthoryear{Mowlaei}{Mowlaei}{2017}]%
        {mowlaei2017triangle}
\bibfield{author}{\bibinfo{person}{Shahir Mowlaei}.}
  \bibinfo{year}{2017}\natexlab{}.
\newblock \showarticletitle{Triangle counting via vectorized set intersection}.
  In \bibinfo{booktitle}{\emph{2017 IEEE High Performance Extreme Computing
  Conference (HPEC)}}. IEEE, \bibinfo{pages}{1--5}.
\newblock


\bibitem[\protect\citeauthoryear{Sanfeliu and Fu}{Sanfeliu and Fu}{1983}]%
        {sanfeliu1983distance}
\bibfield{author}{\bibinfo{person}{Alberto Sanfeliu} {and}
  \bibinfo{person}{King-Sun Fu}.} \bibinfo{year}{1983}\natexlab{}.
\newblock \showarticletitle{A distance measure between attributed relational
  graphs for pattern recognition}.
\newblock \bibinfo{journal}{\emph{IEEE transactions on systems, man, and
  cybernetics}} \bibinfo{number}{3} (\bibinfo{year}{1983}),
  \bibinfo{pages}{353--362}.
\newblock


\bibitem[\protect\citeauthoryear{Sun and Luo}{Sun and Luo}{2020}]%
        {sun2020memory}
\bibfield{author}{\bibinfo{person}{Shixuan Sun} {and} \bibinfo{person}{Qiong
  Luo}.} \bibinfo{year}{2020}\natexlab{}.
\newblock \showarticletitle{In-memory subgraph matching: An in-depth study}. In
  \bibinfo{booktitle}{\emph{Proceedings of the 2020 ACM SIGMOD International
  Conference on Management of Data}}. \bibinfo{pages}{1083--1098}.
\newblock


\bibitem[\protect\citeauthoryear{Sun, Sun, Che, Luo, and He}{Sun
  et~al\mbox{.}}{2020}]%
        {sun2020rapidmatch}
\bibfield{author}{\bibinfo{person}{Shixuan Sun}, \bibinfo{person}{Xibo Sun},
  \bibinfo{person}{Yulin Che}, \bibinfo{person}{Qiong Luo}, {and}
  \bibinfo{person}{Bingsheng He}.} \bibinfo{year}{2020}\natexlab{}.
\newblock \showarticletitle{RapidMatch: a holistic approach to subgraph query
  processing}.
\newblock \bibinfo{journal}{\emph{Proceedings of the VLDB Endowment}}
  \bibinfo{volume}{14}, \bibinfo{number}{2} (\bibinfo{year}{2020}),
  \bibinfo{pages}{176--188}.
\newblock


\bibitem[\protect\citeauthoryear{Ullmann}{Ullmann}{1976}]%
        {ullmann1976algorithm}
\bibfield{author}{\bibinfo{person}{Julian~R Ullmann}.}
  \bibinfo{year}{1976}\natexlab{}.
\newblock \showarticletitle{An algorithm for subgraph isomorphism}.
\newblock \bibinfo{journal}{\emph{Journal of the ACM (JACM)}}
  \bibinfo{volume}{23}, \bibinfo{number}{1} (\bibinfo{year}{1976}),
  \bibinfo{pages}{31--42}.
\newblock


\bibitem[\protect\citeauthoryear{Wolverton, Berry, Harrison, Lowrance, Morley,
  Rodriguez, Ruspini, and Thomere}{Wolverton et~al\mbox{.}}{2003}]%
        {wolverton2003law}
\bibfield{author}{\bibinfo{person}{Michael Wolverton}, \bibinfo{person}{Pauline
  Berry}, \bibinfo{person}{Ian~W Harrison}, \bibinfo{person}{John~D Lowrance},
  \bibinfo{person}{David~N Morley}, \bibinfo{person}{Andres~C Rodriguez},
  \bibinfo{person}{Enrique~H Ruspini}, {and} \bibinfo{person}{Jerome Thomere}.}
  \bibinfo{year}{2003}\natexlab{}.
\newblock \showarticletitle{LAW: A Workbench for Approximate Pattern Matching
  in Relational Data.}. In \bibinfo{booktitle}{\emph{IAAI}},
  Vol.~\bibinfo{volume}{3}. \bibinfo{pages}{143--150}.
\newblock


\bibitem[\protect\citeauthoryear{Yan, Yu, and Han}{Yan et~al\mbox{.}}{2004}]%
        {yan2004graph}
\bibfield{author}{\bibinfo{person}{Xifeng Yan}, \bibinfo{person}{Philip~S Yu},
  {and} \bibinfo{person}{Jiawei Han}.} \bibinfo{year}{2004}\natexlab{}.
\newblock \showarticletitle{Graph indexing: A frequent structure-based
  approach}. In \bibinfo{booktitle}{\emph{Proceedings of the 2004 ACM SIGMOD
  international conference on Management of data}}. \bibinfo{pages}{335--346}.
\newblock


\bibitem[\protect\citeauthoryear{Yang and Sze}{Yang and Sze}{2007}]%
        {yang2007path}
\bibfield{author}{\bibinfo{person}{Qingwu Yang} {and} \bibinfo{person}{Sing-Hoi
  Sze}.} \bibinfo{year}{2007}\natexlab{}.
\newblock \showarticletitle{Path matching and graph matching in biological
  networks}.
\newblock \bibinfo{journal}{\emph{Journal of Computational Biology}}
  \bibinfo{volume}{14}, \bibinfo{number}{1} (\bibinfo{year}{2007}),
  \bibinfo{pages}{56--67}.
\newblock


\end{thebibliography}










\end{document}